\newtheorem{proposition}{Proposition}
\newtheorem{theorem}{Theorem}
\newtheorem{corollary}{Corollary}
\newtheorem{lemma}{Lemma}
\newtheorem{definition}{Definition}
\newtheorem{remark}{Remark}
\newtheorem{example}{Example}
\def\BState{\State\hskip-\ALG@thistlm}
\begin{document}

\def\N{\mathcal{N}}
\def\As{A_{\mathbf{s}}}
\def\Bs{B_{\mathbf{s}}}
\def\Cs{C_{\mathbf{s}}}
\def\SN{S_{\mathcal{N}}}
\def\w{\mathbf{w}}
\def\z{\mathbf{z}}
\def\c{\mathbf{c}}
\def\u{\mathbf{u}}
\def\h{\mathbf{h}}
\def\g{\mathbf{g}}
\def\f{\mathbf{f}}
\def\0{\mathbf{0}}
\def\s{\mathbf{s}}
\def\Aw{A_{\mathbf{w}}}
\def\Bw{B_{\mathbf{w}}}
\def\Cw{C_{\mathbf{w}}}
\def\M{\mathcal{M}}
\def\C{\mathcal{C}}
\def\I{\mathcal{I}}
\def\Cd{\mathcal{C}^\perp}
\def\FF{\mathbb{F}}
\def\FFq{\mathbb{F}_q}
\def\FFqn{\mathbb{F}_{q}^{N}}
\def\FFqk{\mathbb{F}_{q}^{k}}
\def\lambdaC{\Lambda_{\mathcal{C}}}
\def\lambdaCd{\Lambda_{\mathcal{C}^{\perp}}}

\def\support{{\Lambda}}
\def\br{{\bf r}}
\def\bc{{\bf c}}
\def\bz{{\bf z}}
\def\bs{{\bf s}}
\def\bu{{\bf u}}
\def\bv{{\bf v}}
\def\bg{{\bf g}}
\def\bff{{\bf f}}
\def\ground{{\Omega}}
\def\bc{{\bf c}}
\def\solA{{\bf A}}
\def\solB{{\bf B}}
\def\solC{{\bf C}}
\def\field{{\mathbb F}_{q}}
\def\Kp{{\mathcal K}}

\def\N{{\cal N}}
\def\bz{{z_{\scriptscriptstyle \N}}}
\def\by{{y_{\scriptscriptstyle \N}}}
\def\l{\left}
\def\r{\right}
\def\gf{ {\mathbb F}}
\def\rankfn{\rho}
\def\A{\mathcal A}
\def\M{\mathcal M}
\def\X{\mathcal X}
\def\B{\mathcal B}
\def\reals{\mathbb R}
\def\Z{\mathcal Z}
\def\K{\mathcal K}
\def\C{\mathcal C}
\def\p{\prime}
\def\real{{\mathbb R}}
\def\dist{{W}}

\def\x{{\bf x}}
\def\y{{\bf y}}

\def\pattern{{\gamma}}
\def\erasurematrix{{\bf I}}

\def\Ha{{\bf H}}
\def\Hb{{\bf K}}

\def\X{{\bf X}}
\def\H{{\bf H}}
\def\solG{{\bf G}}
\def\zero{{\bf 0}}
\def\r{{\bf r}}

\def\R{\mathcal R}
\def\P{\mathcal P}
\def\cbP{\mathbf P}
\def\FN{\mathbf {FailedNodes}}
\def\FP{\mathbf {FailurePatterns}}
\def\A{\mathcal{A}} 
\def\F{\mathcal{F}} 
\def\G{\mathcal{G}} 
\def\S{\mathcal{S}} 
\def\J{\mathcal{J}} 
\def\E{\mathcal{E}} 
\def\HH{\mathcal{H}}
\def\j{\mathbf{J}}
\def\L{\mathcal{L}}
\def\U{\mathcal{U}}
\def\q{\mathbf{q}}
\def\v{\mathbf{v}}
\def\D{\mathcal{D}}
\def\Q{\mathcal{Q}}
\def\d{\mathbf{d}}
\def\k{\mathbf{k}}
\def\b{\mathbf{b}}
\def\solS{\mathbf{S}}
\def\solF{\mathbf{F}}

\def\lmd{{{\textsf {lmd}}}}
\def\gmd{{{\textsf {gmd}}}}

\def\G{{G}}
\def\bx{{\bf x}}
\def\by{{\bf y}}
\def\bg{{\bf g}}
\def\zeroz{\textbf{0}}
\def\X{{X}}

%\long\def\/*#1*/{}

\algnewcommand{\algorithmicgoto}{\textbf{go to}}%
\algnewcommand{\Goto}[1]{\algorithmicgoto~\ref{#1}}%

\title{Multi-Rack Distributed Data Storage Networks}

\author{\IEEEauthorblockN{Ali Tebbi\IEEEauthorrefmark{2},
Terence H. Chan\IEEEauthorrefmark{2},
Chi Wan Sung\IEEEauthorrefmark{3}} 

\IEEEauthorblockA{\IEEEauthorrefmark{2} Institute for Telecommunications Research, University of South Australia\\
Email: \{ali.tebbi, terence.chan@unisa.edu.au\}}

\IEEEauthorblockA{\IEEEauthorrefmark{3} Department of Electronic Engineering, City University of Hong Kong\\
Email: albert.sung@cityu.edu.hk}

\thanks{An earlier conference version of the paper appeared at ITW 2014 \cite{RMSC}. In this journal version, we present more details on the repair processes of failures and also establish linear programming bounds on the code size.}
}

\maketitle

% As a general rule, do not put math, special symbols or citations
% in the abstract or keywords.
\begin{abstract}

%{\bf Please double check any missing reference,citations. This is the final version.}

The majority of works in distributed storage networks assume a simple network model with a
collection of identical storage nodes with the same communication cost between the nodes. In this paper,
we consider a realistic multi-rack distributed data storage network and present a code design framework
for this model. Considering the cheaper data transmission within the racks, our code construction method
is able to locally repair the nodes failure within the same rack by using only the survived nodes in
the same rack. However, in the case of severe failure patterns when the information content of the
survived nodes is not sufficient to repair the failures, other racks will participate in the repair process.
By employing the criteria of our multi-rack storage code, we establish a linear programming bound on
the size of the code in order to maximize the code rate.

\end{abstract}

% Note that keywords are not normally used for peerreview papers.
\begin{IEEEkeywords}
Multi-rack storage network, repair process, linear programming, symmetry
\end{IEEEkeywords}

% For peer review papers, you can put extra information on the cover
% page as needed:
% \ifCLASSOPTIONpeerreview
% \begin{center} \bfseries EDICS Category: 3-BBND \end{center}
% \fi
%
% For peerreview papers, this IEEEtran command inserts a page break and
% creates the second title. It will be ignored for other modes.
\IEEEpeerreviewmaketitle

%%%%%%%%%%%%%%%%%%%%%%%%%%%%%%%%%%%%
\section{Introduction}
%%%%%%%%%%%%%%%%%%%%%%%%%%%%%%%%%%%%%
\IEEEPARstart{M}{ost} of the existing distributed storage network models assume a very simple structure that the network itself is viewed as a collection of \emph{identical} storage nodes and that the transmission cost between any two nodes are identical \cite{NCDSS,DScodeRepairTransfer,ExactMDScodeIA,LocalReg,LRCDimakis}. However, this model cannot perfectly represent the real world storage networks. In reality, a typical data centre can easily house hundreds of racks each of which contains numerous storage disks \cite{DSNArchProtManag}. While all storage nodes (or disks here) in the network can communicate with each other, the transmission costs in terms of latency or overheads can differ vastly. For example, the transmission latency between storage disks in the same rack is usually much smaller, when compared with the case that both disks are not in the same rack \cite{DiskLocal}. It is reported that in practical networks the inter-rack communication cost is typically $5$ to $20$ times higher than the intra-rack transmission cost \cite{ShuffleWatcher}.

A common approach of storing data in multi-rack storage networks is storing each encoded symbol of a data block in distinct nodes located in distinct racks \cite{WinAzur,XorElephant}. Consequently, repairing any node failure requires transferring data from survived nodes across the racks. Due to the large amount of data which is required to be communicated across racks during a failure repair process, this approach could be highly costly \cite{NetCharDataCener}.

While \emph{Regenerating Codes} \cite{NCDSS,SimpleRegCode,OptimalExactCode} were introduced to reduce the repair bandwidth, \emph{Locally Repairable Codes (LRC)} \cite{LRCGopalan,LRCDimakis,LRCTopo,LRmulti,RLLC,LPLR,CoopLocal,Pyramid} were introduced in order to optimise the node failure repair by involving a small group of helper nodes. It is worth to mention that any storage code, specially locally repairable codes which are designed for generic model of distributed storage networks can also be used for the multi-rack networks. However, these coding schemes are not able to provide optimal methods to repair the failures in the network since they do not take into consideration the different transmission cost between the nodes in the multi-rack storage networks. For instance, consider the locally repairable code known as \emph{Pyramid Code} \cite{Pyramid} which is used in Windows Azur storege \cite{WinAzur}. This pyramid code, as illustrated in Figure \ref{fig:pyramid}, is a $(n=16,k=12,r=6)$ locally repairable code with two \emph{local repair groups} of size $r=6$. The local repair groups are $X_{1,1}-X_{1,6}$ with its local parity node $P_{X_1}$ and $X_{2,1}-X_{2,6}$ with its local parity node $P_{X_2}$. Parity nodes $Q_1$ and $Q_2$ are global parities. Assume each repair group (and one of the global parity nodes to keep the load balanced) is stored in a separate rack.  
\begin{figure}[t]
\centering
\includegraphics[width=.5\textwidth]{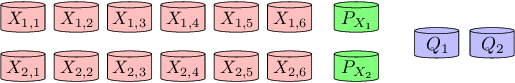}
\caption{A (16,12) pyramid code used in Windows Azur storage.}
\label{fig:pyramid}
\end{figure} 
Any node failure can be repaired inside the rack by the $r=6$ surviving nodes. However, if there exists multiple failures within a rack, the content of the helper nodes from the other rack needs to be transmitted across to the failed rack. More precisely, at least 6 symbols need to be transferred across the racks in order to repair a single failure. This will impose a high repair bandwidth to the storage network.

%----------------------------------------------------
\begin{figure}[t]
\centering
\includegraphics[width=.4\textwidth]{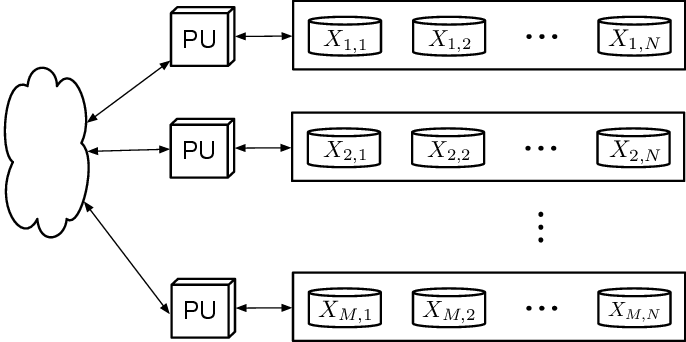}
\caption{The multi-rack distributed storage network. Each rack is equipped with a processing unit (PU) which is responsible for computations and intra-rack and inter-rack communication.}
\label{fig:rack}
\end{figure} 

In this paper we introduce a realistic multi-rack storage network which represents the real data storage networks more generally and practically. We focus on a storage code-design framework, specifically tailored for multi-rack data storage networks and their requirements. Our storage network model depicted in Figure \ref{fig:rack} consists of $M$ racks each of which contains $N$ storage nodes. We will assume that each rack has a \emph{Processing Unit} (PU) which is directly connected to all storage nodes in the same rack. 
It is worth mentioning that in practical data centres all servers in a rack are connected to an in-rack switch which is called Top-of-Rack (ToR) switch. The ToR switch is responsible for the intra-rack communication while one or more servers in the rack could be used as compute nodes \cite{CiscoDesignI}. This architecture can be viewed as the Processing Unit of the rack.
The rack processing unit is responsible for both computation on the stored data and communication between the nodes in the rack. Moreover, the processing units of racks can communicate to each other in order to transmit data from one rack to another (via \emph{aggregation switches} \cite{CiscoDesignI}). In other words, storage nodes in two different racks can only communicate via their respective processing units. It is very common in realistic systems that the communication cost between the storage nodes within a rack (via its PU) is much lower than the communication cost between two different processing units (and hence located in two different racks) \cite{DiskLocal}. Therefore, it is desirable and in fact critical that a failed node could be repaired by only the survived nodes within the same rack in order to keep the repair cost low. We further assume that the system bottleneck is at the PU of each rack. Therefore, in our code-design framework, the focus is to design distributed storage codes that minimise the communication costs between nodes and the processing unit of the rack. It is only for some occasional severe failure patterns that it will require nodes from other racks to assist in the repair process. 

Our multi-rack storage code is defined with three parity check matrices $\Ha$, $\Hb$, and $\solG$. Matrix $\Ha$ determines the intra-rack repair groups such that any failed node in any of the racks can be repaired by at most $r_1$ surviving nodes in the same rack. We derive the conditions on $\H$ such that the intra-rack local repair can still be successful even in the presence of multiple failures in the rack. However, as mentioned earlier, in the case of severe failure patterns where the intra-rack repair fails, matrices $\Ha$ and $\Hb$ determine a group of helper nodes from the same rack of the failure and the other racks in order to proceed with the inter-rack repair. Moreover, in our coding scheme, parity matrix $\solG$ determines the group of helper racks during an inter-rack repair. We show that $\solG$ can be designed separately as the parity check matrix of a locally repairable code such that only a small group of racks participate in the inter-rack repair process.

In general, existing locally repairable coding schemes are not suitable candidates for practical multi-rack distributed storage networks. Assume that a repair group of an LRC is considered to be a rack. In the case that a local repair fails, the network will need other repair groups (racks) to help for the failure repair. This assumption is costly due to the geographically distributed nature of the storage networks. However, in our coding scheme, it is only in occasional sever failure patterns that the other racks are needed to help repairing the failure. Moreover, the existing LRC schemes do not take into account the different communication cost between the nodes.  A major advantage of our coding scheme compared to similar schemes, such as LRCs, is the concept of the rack processing unit. In contrast to LRCs, our coding scheme enables the helper racks to only transmit a linear combination of the helper nodes' content to the failed rack via their processing unit. This approach optimises the inter-rack repair bandwidth and significantly reduces the inter-rack communication cost.

%%%%%%%%%%%%%%%%%%%%%%%%%%%%%%%%%%%%%%%%%%%%%%%%
\subsection{Related Work}

A regenerating code \cite{NCDSS} is proposed in \cite{HierarchicalDataCenter} in order to minimise the across rack repair bandwidth. In this coding scheme, each rack stores multiple encoded symbols (rather than one symbol \cite{XorElephant}) of a data block in distinct storage nodes. To repair a failed node, first a regeneration will be occurred within each rack (i.e., one of nodes in each rack collects the encoded data from all nodes in the rack and re-encodes it) and then the regenerated data from each rack will be transferred to the failed rack to regenerate the content of the failed node. 
Recently, heterogeneous distributed storage networks (including the multi-rack models) has received a fair amount of attention \cite{RackModelDSN,Gaston2013A-Realistic,NonHomoRack,NonHomogenDSN,CostBandwith} due to the heterogeneous nature of the practical storage networks and their various applications such as hybrid storage systems \cite{HybridOceanstore}, video-on-demand systems \cite{p2pvod}, and heterogeneous wireless networks \cite{WirelessD2D}. A  heterogeneous model for distributed storage networks is introduced in \cite{CostBandwith} where a static classification of the storage nodes is proposed. In this model the storage nodes are partitioned into two groups with "cheap" and ''expensive'' bandwidth. In other word, the data download cost, to repair a failure, from the nodes in the "cheap bandwidth" group will be lower than the data download cost from the "expensive bandwidth" group. The model in \cite{CostBandwith} partially addressed the issues that the communication costs among nodes are not all equal where the download cost from one of the groups is always cheaper than the other group. However, this model does not fit well into a multi-rack model, where the transmission cost should depend on both where the transmitting and the receiving (or the failed) storage nodes are located.

A more realistic rack model of a distributed storage network has been investigated in \cite{Gaston2013A-Realistic}, in which the authors considered a two-rack model. In their model, the communication cost between the nodes in the same rack is smaller than between two different racks. Therefore, the main difference of this model compared to the one in \cite{CostBandwith} is that the classification of the storage nodes depends on the location of the failed node. More precisely, the data download cost from the nodes in the same rack (i.e., group) of the failure is lower (i.e., cheap bandwidth) than the download cost from the other group (i.e., expensive bandwidth). As such, it is desirable that more data should be transmitted by nodes in the same rack of the failed node during the repair process. Using an information flow graph, \cite{Gaston2013A-Realistic} derives the trade-off between the storage cost and repair cost by identifying that if certain choice of parameters are achievable or not. The trade-off in \cite{CostBandwith} and \cite{Gaston2013A-Realistic} is asymptotic (without restriction on the alphabet size) and functional repair is always assumed (i.e., the failed node is not required to recover exactly what it was previously storing, as long as the whole storage system is still robust after repair).

A non-homogenous storage system is considered in \cite{NonHomogenDSN} where there exists a super node in the network with higher storage capacity, reliability and availability probability than the other nodes. It has been shown that this model can achieve the optimal bandwidth-storage trade-off bound in \cite{NCDSS} with a smaller file and alphabet size than the traditional homogeneous storage network in \cite{RepairHadamard}.

The Data retrieval problem in heterogeneous storage systems is studied in \cite{DSAllocate}. In this model it is assumed that each node has a different storage size where any amount of encoded data can be stored in each node such that the total allocated storage remains less than a threshold. The optimal allocation to retrieve the original data is studied such that the data collector can access to only a random group of nodes. A combination of the repair problem with data allocation is investigated in \cite{RepairAlloc1} and \cite{RepairAlloc}. In these works a general model of a heterogeneous storage network is considered where each node has a different storage and download cost. The amount of data allocated to each storage node and the amount of data to be downloaded from each survived node to repair a failure has been investigated using the information flow graph to minimise the storage and repair cost and establishing a storage-repair trade-off.

The capacity of heterogenous storage networks is studied in \cite{CapacityHeterogeneous}. The proposed network in this work consists of storage nodes with different storage capacities and repair bandwidths. It is assumed that the repair bandwidth of each node depends on the repair group that the helper nodes belong to. The functional repair of node failures is assumed and the capacity of this network as the maximum amount of stored information in order to reach a level of reliability is studied. 

Block Failure Resilient (BFR) codes are studied in \cite{RBFR}. The authors consider a distributed storage network with a single failure domain \cite{AvailGlob} where the storage nodes are divided in blocks (e.g. racks). The failure of a block will result in unavailability of the nodes in that block. Consider a storage network with $n$ nodes and $b$ blocks where each block contains $\frac{n}{b}$ nodes. BFR codes relax the node-repairability and data-reconstruction constraints of the regenerating codes such that any failed node within a block can be repaired by contacting any $d_r$ nodes of any $b_r=b-\sigma$ available blocks (i.e., $d=d_r b_r$ nodes in total). Moreover, the original data can be retrieved by contacting any $k_c$ nodes of any  $b_c=b-\rho$ available blocks (i.e., $k=k_c b_c$ nodes in total) where $\rho$ is the resilience parameter. For such a relaxation, similar to the regenerating codes, the storage per node and repair bandwidth trade-off is derived. Locally repairable BFR codes are also introduced in \cite{RBFR} such that a failed node can be repaired by contacting the nodes of a local group of blocks (e.g., cluster). One of the main differences of the network model in \cite{RBFR} with our model is that it is assumed that always during the repair process of a failed node, the other nodes of the same rack are also unavailable (i.e., single failure domain) and they are not able to contribute in the repair process. In our model, we assume that in non-severe failure patterns, the available nodes of the rack can locally repair the failed node.

A similar network model to our work is considered in \cite{ClusterSS_trade-off} where the network consists of $n$ clusters (e.g., racks) each of them stores $m$ nodes. The network is fully connected such that the nodes within a cluster are connected via an intra-cluster link and the clusters are connected via an inter-cluster link. The proposed coding scheme is a generalisation of the regenerating codes \cite{NCDSS} where a file of size $B$ symbols is encoded into $nm\alpha$ symbols and stored across $nm$ nodes in the network such that each node stores $\alpha$ symbols. In order to repair a failed node, $\beta$ symbols will be downloaded each from any subset of $d$ clusters. These $\beta$ symbols are a function of the content (at most $\gamma^\prime$, $\gamma^\prime \leq \alpha$ symbols) of at most $\ell^\prime$ nodes in each helper cluster. Moreover, the content (at most $\gamma$, $\gamma \leq \alpha$ symbols) of $\ell$ local helper nodes will be downloaded to contribute in the repair process. Utilising the information flow graph under the functional repair settings, an upper bound on the file size $B$ is derived. For fixed values of $B$, the bound gives the trade-off between storage and inter-cluster bandwidth. A lower bound on intra-cluster bandwidth $\gamma$ is also obtained. Unlike our coding scheme where the inter-rack repair happens only in the case of severe failure patterns, a failed node in \cite{ClusterSS_trade-off} always is repaired by the help of a group of $d$ clusters (racks). Note that, all the bounds obtained in the aforementioned papers \cite{RBFR,ClusterSS_trade-off} are based on the information flow graph under functional repair settings.

The capacity of clustered storage systems is investigated in \cite{CapacityCluster}. The proposed network model consists of $n$ storage node distributed over $L$ clusters each of which contains $n_I=\frac{n}{L}$ nodes each with storage size $\alpha$. A failed node is regenerated by downloading $\beta_I$ symbols each from $d_I$ nodes within the same rack and $\beta_c$ symbols each from $d_c$ nodes from each cluster. It is assumed that during the repair process, all other nodes are available and will be contacted (i.e., $d_I=n_I-1$ and $d_c=n-n_I$). Also $\beta_I \geq \beta_c$, due to the lower inter-cluster communication bandwidth compared to the intra-cluster.  Employing the information flow graph, the storage capacity of this network is obtained in terms of the node storage size $\alpha$, intra-cluster repair bandwidth $\gamma_I=d_I \beta_I$, and inter-cluster repair bandwidth $\gamma_c=d_c \beta_c$. Note that since the coding scheme is based on regenerating codes, in order to minimise the repair bandwidth all $n-1$ nodes need to help to repair the failure.

The availability of clustered storage networks is studied in \cite{AvailablityCluster}. The aim in this work is to partition $n$ storage nodes of the network into $s$ clusters of size $d$ (there could be an extra cluster of size $< d$) such that any failed node in a cluster can be repaired by any of the remaining clusters (except the last cluster with less storage nodes) as its repair group. Then, the network is said to have availability $s-1$. The objective is achieving high availability and low repair bandwidth. The storage per node vs repair bandwidth trade-off is characterised following the network information flow graph under the functional and exact repair settings. Some class of codes are also proposed to minimise the exact repair bandwidth.

The notion of codes with \emph{hierarchical locality} has been studied in \cite{HierarchicalLocality}. Codes with hierarchical locality are an extension on the codes with $(r,\delta)$-locality which are introduced in \cite{LocalReg} such that any code symbol can be recovered locally by at most $r$ other symbols even in the presence of an additional $(\delta-2)$ erasures. A $h$-level hierarchical code is an $[n,k,d]$ linear code $\C$ with locality parameters $[(r_1,\delta_1),(r_2,\delta_2),\ldots,(r_h,\delta_h)]$ where depending on the number of the failures (i.e., $\delta_i-1$), there exists a punctured code $C_i$ with locality parameters $(r_i,\delta_i)$ that can repair the failures.

The fact that coding at large lengths allows better error-tolerance for a given overhead, motivated the work in \cite{GridTopology}. One of the main challenges in the storage networks with large length codes is correlated failures which could happen due to e.g. a rack failure, a data centre failure, or failure of a power source shared by a group of servers (i.e., single failure domain \cite{AvailGlob}). This work views the code design for a distributed storage network as a two step process of 1)picking a topology and 2) optimising encoding/decoding efficiency and maximising reliability. The authors consider a simple grid-like topology (which is also extendable to the multi-rack storage networks) where each row and column of coded symbols has a bunch of parity equations and there are some global parity equations that depend on all symbols (i.e., tensor products of row and column codes, augmented with global parity equations). A lower bound on the field size of the Maximally Recoverable codes is obtained and the correctable erasure patterns by these codes are characterised. An asymptotically optimal family of Maximally Recoverable codes for one basic topology is also proposed.  

Despite of the applications of the works in \cite{HierarchicalLocality} and \cite{GridTopology} in multi-rack storage networks, neither of them propose a general code design framework which is specifically tailored for multi-rack distributed storage network considering various network parameters such as different intra-rack and inter-rack communication cost. For example, assume that a storage code with hierarchical locality is employed in a multi-rack storage network. Depending on the failure pattern, it would need all the racks to be available to repair a failure which is not a practical assumption due to geographically distributed nature of the network. Moreover, in \cite{HierarchicalLocality} and generally other LRC schemes in the literature, it is assumed that during a repair process, the content of each helper node will be transmitted separately to the failed node's replacement (newcomer) in order to recover the lost data. In a multi-rack storage network, this will impose a high inter-rack repair bandwidth to the network due to the high inter-rack communication cost.

\subsection{Contributions and Organisation}
The main contributions of this paper are:
\begin{itemize}
\item
A code-design framework for multi-rack storage networks: 
we propose a general coding scheme for multi-rack distributed storage networks. Our proposed scheme is defined by three parity check matrices $\Ha$, $\Hb$, and $\solG$.
This coding scheme is able to locally repair any node failure within the rack by using matrix $\Ha$ in order to minimise the repair cost. Moreover, in the case of severe failure patterns that the failures cannot be repaired only by the survived nodes inside the rack, by using matrix $\Hb$, our scheme is able to engage some of the nodes in other racks in the repair process. The helper racks will be determined by matrix $\solG$.
\item
Establishing linear programming bounds on the code size: we show that maximising the rate of the multi-rack storage code is equivalent to maximising the code size. We establish a linear programming problem on the code size based on the definition and criteria of our multi-rack storage code. The maximum size of the code in turn will determine the optimal size of the parity check matrices $\Ha$ and $\Hb$.
\end{itemize}

This paper is extended from our earlier work on multi-rack distributed storage codes \cite{RMSC} which is presented in IEEE Information Theory workshop (ITW 2014). The rest of this paper is organised as follows. In Section \ref{Sec:IV} we present the code-design framework for multi-rack storage networks and give a detailed description of its criteria and the failures repair processes. We also derive the code rate in this section. Then, In Section \ref{Sec:V}, we establish a linear programming problem to upper bound the code size. Moreover, in this section, we exploit symmetry in our code in order to reduce the complexity of the problem. The paper is concluded in Section \ref{Sec:Conc}.

%%%%%%%%%%%%%%%%%%%%%%%%%%%%%%%%%%%%%%%%%%%%%%%%

\section{Multi-rack Storage Code --  Design Framework} \label{Sec:IV}
%%%%%%
In this section, we first introduce our system model and multi-rack storage code which is defined by three parity check matrices $\Ha$, $\Hb$, and $\solG$. We then describe the intra-rack repair process and show how the failures can be repaired only by the surviving nodes inside the rack using the parity check matrix $\Ha$. The inter-rack repair process will be described afterwards where we show how a failure can be repaired by the surviving nodes inside the rack and the nodes in helper racks when the intra-rack repair fails. Finally, we present the rate of the multi-rack storage code which will be used in the next section to establish an upper bound on the code size.

Consider the rack model storage network depicted in Figure \ref{fig:rack}. This multi-rack data storage network consists of $M$ racks each of which contains $N$ storage nodes (or storage disks). We will represent each node as
\[
\left(X_{m,n}, ~ \forall m \in \M \text{ and } \forall n \in \N \right)
\]
where $\M=\{1, \ldots, M\}$, $\N=\{1, \ldots, N\}$, and $X_{m,n}$ is referred to the $n$th node in the $m$th rack. Abusing notations, $X_{m,n}$ will also be referred to the content stored at that particular storage node. We define 
\[
X_{m,*} \triangleq [X_{m,1} , \ldots, X_{m,N}],
\]
whose entries are from $\FFq$. Particularly, $X_{m,*}$ is the vector of encoded data stored in the rack $m$. 
Collecting all the stored contents from each rack, we have 
\begin{align} \label{eq:codewordmatrix}
\X \triangleq
\left[
\begin{array}{ccc}
X_{1,1} & \cdots & X_{1,N} \\
\vdots & \ddots & \vdots \\
X_{M,1} & \cdots & X_{M,N} 
\end{array}
\right].
\end{align}

In this paper, we assume that each rack has a processing unit, which is responsible for all computations required in nodes repair. In other words, contents stored in a failed node will be regenerated in the processing unit, before sending all the regenerated content to the failed node (or its replica).
 
\begin{definition}[Multi-rack storage codes] \label{def:multi-rack parities}  
A {\bf{\emph{multi-rack storage code}}} is defined by  three parity check matrices $(\Ha ,  \Hb,  \solG)$ over $\FFq$ of respectively sizes $S_{1} \times N$, $S_{2} \times N$ and $L \times M$. The three matrices induce a storage code such that  $\X$ must satisfy the following parity-check equations
\begin{align} 
\Ha \X^{\top}  & = \zero \label{eq:IRP}\\ 
\Hb \X^{\top}  \solG^{\top} & = \zero.\label{eq:ARP}
\end{align}
We will call $\Ha, \Hb$ respectively the intra-rack and inter-rack parity matrices. The matrix $\solG$ will be called helper-rack parity check matrix. 
\end{definition}
 
Later in Section \ref{Sec:V} we will show that maximising the code rate is equivalent to maximising the size of the code which in turn can determine the optimal value of $S_1$ and $S_2$. Moreover, we will show that the value of $L$ is only dependent on the network topology and can be chosen separately from $S_1$ and $S_2$.   

In multi-rack storage code, it is expected that most of the node failures should be recovered and repaired locally within their own racks. However, in the special case where local repair is not possible, redundancies added among rack will be used in the recovery. As there is a much lower probability that nodes in a rack cannot be recovered locally within the rack, this paper focuses on the special case where only one rack has node failures (or that all failed nodes in other racks can be completely repaired locally). 

We now consider the first case where failures in a rack can be repaired by using only nodes within the rack.

%%%%%%%%%
\subsection{{Intra-Rack Repair}}
%%%%%%%%%

In this subsection, we will describe how to repair nodes locally within  a rack. Assume without loss of generality that rack 1 fails (i.e., a group of nodes fails inside the rack). Let $\pattern$ be the index set for the nodes in rack 1 that fail. In other words, the values of $\{X_{1, n} , n\in\pattern\}$ (i.e., the node content) are unknown to the processing unit in rack 1. 
Let 
\[
{\x} =
\left[
\begin{array}{c}
x_{1} \\ \vdots \\ x_{N}
\end{array}
\right]
\]
where 
$$
x_{n} = 
\begin{cases}
X_{1,n} & \text{ if } n \not \in \pattern \\
0 & \text{ otherwise.}
\end{cases}
$$
In other words, ${\x}$ is obtained from $X_{1,*}^{\top}$ by replacing   $X_{1, n} $  with $0$ for  all $n\in\pattern $. 

Define $\erasurematrix^{\beta}_{N}$ as an $N\times N$  diagonal matrix such that its $(n,n)^{th}$ entry is 1 if $n \in\beta$ and is 0 otherwise. For simplicity, we will drop  the subscript $N$ if it is understood from the context. Let $\bar\pattern$ be the complement set of $\pattern$. Therefore, $\bar\pattern$ will be the set of survived nodes in the rack $1$. Consequently,
$
\erasurematrix^{\bar\pattern} X_{1,*}^{\top} = {\x}
$. 
Recall that 
$
\Ha X_{1,*}^{\top}   = \zero. 
$
Therefore, rack 1 can repair ALL its failed nodes by the local rack survived nodes if and only if the following system of linear equations
\begin{align} \label{eq:21}
\begin{cases}
\erasurematrix^{\bar\pattern} X_{1,*}^{\top}   = {\x}  \\
\Ha X_{1,*}^{\top}    = \zero 
\end{cases}
\end{align}
has a unique solution. 
For notation simplicity, we will use $\langle \erasurematrix^\beta,\Ha \rangle$ to denote  the vector space spanned by rows of $\erasurematrix^\beta$ and $\Ha$. The set of linear equations in \eqref{eq:21} has a unique solution if and only if 
\begin{align}\label{eq:22}
\dim \langle \erasurematrix^{\bar\pattern}  , \Ha \rangle = N.
\end{align}

\def\dist{{\text{Dist}}}
Let $\pattern_{o}$ be the smallest set such that 
$ 
\dim \langle \erasurematrix^{\bar{\pattern_{o}}}  , \Ha \rangle < N
$.  We will denote its size $|\pattern_{o}|$ as $ \dist(\Ha) $. 
By definition, if $|\pattern| < \dist(\Ha) $, then it is sufficient to use intra-rack repair 
to repair all failed nodes. 

\begin{remark} 
It is well known that $\emph{\dist}(\Ha)$ is equal to the minimum distance of a linear code defined by the parity check matrix  $\Ha$.  
\end{remark}

\begin{definition}[support] \label{def:support}
The support $\lambda(\mathbf{v})$ of a vector $\mathbf{v} = \left[v_1, v_2, \ldots, v_N\right]$ is a subset of $\{1, 2, \ldots, N\}$ such that $i \in \lambda(\bf{v})$ if and only if $v_i \neq 0$, $\forall i \in \{1, 2, \ldots, N\}$.
\end{definition}

%%%%%%

\begin{definition} \label{def:RepairGroup}
Consider any  matrix $\Ha$ and vector $\r$ (such that both have $N$ columns).  
For any   $j = 1, \ldots, N$, let 
\[
\Omega(\H,\r,j) = \left\{\lambda(\h) \setminus j: \h \in \langle \Ha , \r\rangle \textnormal{ and } j \in \lambda(\h) \right\}.
\]
If $\r$ is the zero vector, we will simply denote $\Omega(\H,\r,j)$ by $\Omega(\H,j)$.
\end{definition}

\begin{remark}
As we shall see, $\Omega(\H,\r,j)$ plays a fundamental role in determining whether failures in a rack can be repaired or not. Specifically,    $\Omega(\H,j)$ contains all intra-rack repair groups for   $X_{1,j} $. If there exists a set (or group) $\beta \in \Omega(\H,j)$ such that all $X_{1,\ell} $ are survived for all $\ell \in \beta $, then the failed node $X_{1,j} $ can be repaired by 
using only $X_{1,\ell} $  for all  $\ell \in \beta $. The general case where $\r$ is non-zero vector will be used in the inter-rack repair and will be explained soon.
\end{remark}

%%%%%%%%%%%%%%%%%%%%%%%%%%%%%%%%%%%
\begin{example}
Suppose $\Ha$ is the intra-rack parity check matrix and is given by
\begin{equation} \label{eq:IRPM4}
\Ha=
\left[
\begin{array}{cccccccc}
1 & 1 & 1 & 0 & 1 & 0 & 0 & 0 \\ 
1 & 1 & 0 & 1 & 0 & 1 & 0 & 0 \\
0 & 1 & 1 & 1 & 0 & 0 & 1 & 0 \\
1 & 0 & 1 & 1 & 0 & 0 & 0 & 1 \\
\end{array}
\right].
\end{equation}

From Definition \ref{def:RepairGroup}, $\Omega(\H,1)$ will be given by
\begin{multline*}
\Omega(\H,1) = \Big\{\{2,3,5\},\{2,4,6\},\{3,4,8\},\{2,7,8\},\{3,6,7\},\\
\{4,5,7\},\{5,6,8\},\{2,3,4,5,6,7,8\}\Big\},
\end{multline*}
where the entries are the index set of a group of nodes in each rack. Each subset in $\Omega(\H,1)$ denotes a intra-rack repair group for repairing $X_{1,1}$ (or $X_{m,1}$ in general).
%Then, $\Omega(\H,1)$ is the inside rack repair groups of node $X_{m,1}$. In other words, if node $X_{m,1}$ fails, any group of nodes specified by $\Omega(\H,1)$ can repair it.
\end{example}
%%%%%%%%%%%%%%%%%%%%%%%%%%%%%%%%%%%

\begin{lemma}\label{lem:lemma1_4}
If $\beta \in \Omega(\H,\r,j)$, then there exist vectors ${\bf y} $, ${\bf y}' $  and $a\in \FFq$ such that 
\begin{align}\label{eqlemma4}
{\bf e}_{j} = {\bf y} \Ha + a \r + {\bf y}' \erasurematrix^{\beta}  
\end{align}
where
${\bf e}_{j}=[e_{j,1} , \ldots, e_{j,N}]$ is a length $N$ row vector such that 
\begin{align}
e_{j,\ell} = 
\begin{cases}
1 & \text{ if } \ell = j  \\
0 & \text{ otherwise.}
\end{cases}
\end{align}

Conversely, if there exist vectors ${\bf y} $, ${\bf y}' $  and $a\in \FFq$ such that \eqref{eqlemma4} holds, then there exists $\alpha \subseteq \beta $ such that $\alpha \in \Omega(\H,\r,j)$.
\end{lemma}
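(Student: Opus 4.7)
The plan is to exploit the explicit description of $\Omega(\H,\r,j)$ in Definition \ref{def:RepairGroup}: a set $\beta$ belongs to $\Omega(\H,\r,j)$ precisely when some vector $\h$ in the row span $\langle \Ha,\r\rangle$ has support exactly $\beta \cup \{j\}$ (with $j \notin \beta$). The key observation is that the row span of $\erasurematrix^{\beta}$ is exactly the set of length-$N$ vectors whose support is contained in $\beta$, because $\erasurematrix^{\beta}$ is diagonal with $1$'s in positions indexed by $\beta$ and $0$'s elsewhere, so its rows are the standard basis vectors $\e_\ell$ for $\ell\in\beta$. Hence any vector supported on $\beta$ can be written as $\y' \erasurematrix^{\beta}$ for a suitable row vector $\y'$.

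First I would pick a witness $\h \in \langle \Ha,\r\rangle$ with $\lambda(\h)\setminus\{j\} = \beta$ and write it as $\h = \y_0 \Ha + a_0 \r$ for some row vector $\y_0$ and scalar $a_0 \in \FFq$. Since $j \in \lambda(\h)$, the $j$-th coordinate $h_j$ is a nonzero element of $\FFq$, so I can rescale and set $\tilde{\h} = h_j^{-1}\h$, which still lies in $\langle \Ha,\r\rangle$ and satisfies $\tilde h_j = 1$ together with $\lambda(\tilde{\h})\subseteq \beta\cup\{j\}$.

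Next I would subtract $\e_j$: the vector $\g \triangleq \tilde{\h} - \e_j$ has $g_j = 0$ and $g_\ell = \tilde h_\ell$ for $\ell \neq j$, so its support lies inside $\beta$. By the observation above, $\g = \y' \erasurematrix^{\beta}$ for some $\y'$. Rearranging gives
\begin{align*}
\e_j = \tilde{\h} - \g = (h_j^{-1}\y_0)\Ha + (h_j^{-1}a_0)\r - \y'\erasurematrix^{\beta},
\end{align*}
and setting $\y = h_j^{-1}\y_0$, $a = h_j^{-1}a_0$, and absorbing the sign into $\y'$ yields the desired identity \eqref{eqlemma4}.

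There is no serious obstacle here; the proof is essentially unpacking the definition of $\Omega(\H,\r,j)$ and normalizing the witness. The only subtlety worth a sentence of care is the rescaling step, which requires $\FFq$ to be a field (so that $h_j^{-1}$ exists) and the observation that $j\notin\beta$ so that subtracting $\e_j$ does not destroy the fact that the remainder is supported on $\beta$.
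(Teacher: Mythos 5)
Your proof is correct and follows essentially the same route as the paper's: normalize the witness $\h\in\langle\Ha,\r\rangle$ so that its $j$-th entry is $1$, subtract $\e_j$ to obtain a vector supported on $\beta$, observe that any such vector is in the row span of $\erasurematrix^{\beta}$, and rearrange. The only difference is cosmetic — the paper writes the $\beta$-supported vector explicitly as $-\bu\erasurematrix^{\beta}$ rather than invoking the general fact that the row span of $\erasurematrix^{\beta}$ consists exactly of vectors supported on $\beta$ — and you make the rescaling by $h_j^{-1}$ explicit, which the paper silently folds into the choice of witness.
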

% PROOF
\begin{proof}
Since $\beta \in \Omega(\H,\r,j)$, then there exists $\bu= [u_1,\ldots,u_N]$ such that
1) $ \bu = {\bf y}\Ha + a\r $ for some vector ${\bf y}$ and $a \in \FFq$, 
2) $u_{j} = 1 $ and 3) $\lambda(\bu) \setminus \{j\} = \beta$.
Let ${\bf v}   =  - \bu \erasurematrix^{\beta}  $.  
Since $\lambda(\bu) \setminus \{j\} = \beta$, 
%If $\bu = [u_1,\ldots,u_N]$ and 
%${\bf y}' = [y'_1,\ldots,u_N]$, then $y'_{i} = -u_{i}$ for all $i \neq j$ and $y'_{j} =0$.
$
{\bf v}  =  - \bu +  {\bf e}_{j}
$. 
Hence, 
\begin{align*}
{\bf e}_{j} & =  {\bf v}     +   \bu  \\
& = {\bf y}\Ha + a\r + {\bf v} \\
& = {\bf y}\Ha + a\r  -  {\bf u}\erasurematrix^{\beta}.
\end{align*}
The lemma thus follows by letting  ${\bf y}' = -  {\bf u}$. The proof of the converse is straightforward and is omitted. 
\end{proof}

Based on Lemma \ref{lem:lemma1_4},  the following theorem specifies conditions for intra-rack repair.

%%%%%%%%%%%%%%%%%%%%
%%% Theorem 1
%%%%%%%%%%%%%%%%%%%%
\begin{theorem}[Intra-rack Repair]\label{thm1}
Suppose node $j$ fails in rack  $m=1$. Let 
 $\pattern_{j}$  be the index set for all failed nodes\footnote{
 $\pattern_{j}$ can be interpreted as the set of failed nodes at the moment when 
 the node $j$ is being repaired.} (hence,  $j \in \pattern_{j}$). 
%Suppose that the node $X_{1,j}$ (i.e., the $j$th node in rack 1) fails.
If $\beta_{j} \subseteq \{1, \ldots, N\}$ satisfies the following two criteria, 
\begin{enumerate}
\item \label{item:IRRP_1}
$\beta_{j} \in \Omega(\Ha , j)$, and
\item $\beta_{j} \cap \pattern_{j} = \emptyset$,
\end{enumerate}
then there exists $c_{j,n}$ for $n\in \beta_{j}$ such that 
\begin{align}\label{eq:thm1}
X_{1,j} = \sum_{n\in\beta_{j}} c_{j,n} X_{1,n}.
\end{align}
\end{theorem}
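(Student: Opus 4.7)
The plan is to reduce Theorem 1 to a direct application of Lemma 1 specialised to $\r=\mathbf{0}$. Since $\beta_j\in\Omega(\Ha,j)=\Omega(\Ha,\mathbf{0},j)$, criterion (\ref{item:IRRP_1}) lets me invoke Lemma 1 with $\r=\mathbf{0}$ (and then the $a\r$ term in \eqref{eqlemma4} is zero). I therefore obtain row vectors $\mathbf{y}$, $\mathbf{y}'$ such that
\begin{equation*}
\mathbf{e}_j \;=\; \mathbf{y}\Ha \;+\; \mathbf{y}'\erasurematrix^{\beta_j}.
\end{equation*}
This is the purely algebraic identity that will carry the proof; extracting the repair coefficients is then a matter of evaluating both sides on the stored content.

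Next, I would right-multiply both sides of this identity by the column vector $X_{1,*}^\top$. On the left, $\mathbf{e}_j X_{1,*}^\top = X_{1,j}$. On the right, the first term $\mathbf{y}\Ha X_{1,*}^\top$ vanishes because rack $1$'s content satisfies the intra-rack parity equation $\Ha X_{1,*}^\top = \zero$ from Definition \ref{def:multi-rack parities}. The second term $\mathbf{y}'\erasurematrix^{\beta_j}X_{1,*}^\top$ simplifies, because $\erasurematrix^{\beta_j}$ kills all coordinates outside $\beta_j$, to $\sum_{n\in\beta_j} y'_n X_{1,n}$. Setting $c_{j,n}\triangleq y'_n$ for each $n\in\beta_j$ then yields exactly \eqref{eq:thm1}.

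Finally, I would remark on the role of the second criterion $\beta_j\cap\pattern_j=\emptyset$: the algebraic decomposition above is a consequence of criterion (\ref{item:IRRP_1}) alone, but for the identity to describe an \emph{actual} repair the processing unit must have access to every $X_{1,n}$ appearing on the right-hand side. Criterion (2) guarantees precisely that each $n\in\beta_j$ indexes a surviving node, so the sum $\sum_{n\in\beta_j} c_{j,n} X_{1,n}$ is computable within the rack.

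I do not anticipate a real obstacle here: the content of the theorem is almost entirely packed into Lemma 1, and the remaining work is the one-line substitution $\Ha X_{1,*}^\top=\zero$. The only small point to be careful about is keeping the two criteria cleanly separated—criterion (1) delivers the existence of the coefficients, while criterion (2) is what makes those coefficients usable for repair, and the write-up should not conflate them.
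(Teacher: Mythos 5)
Your proof is correct and follows essentially the same route as the paper's: invoke Lemma \ref{lem:lemma1_4} with $\r=\mathbf{0}$ to decompose $\mathbf{e}_j$, right-multiply by $X_{1,*}^\top$, use $\Ha X_{1,*}^\top=\zero$ to kill the first term, and read off the coefficients from $\mathbf{y}'\erasurematrix^{\beta_j}$. Your closing remark on the distinct roles of the two criteria matches the interpretation the paper gives immediately after the theorem.
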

%-------------------------------------
\begin{proof}
By Lemma \ref{lem:lemma1_4} and criterion \ref{item:IRRP_1}, there exists  
${\bf y} $ and ${\bf y}' $ such that 
\begin{align} \label{eq:evector}
{\bf e}_{j} = {\bf y} \Ha + {\bf y}' \erasurematrix^{\beta_{j}}.  
\end{align}
Hence,
\begin{align}
{\bf e}_{j} X_{1,*}^{\top} & =  ({\bf y} \Ha + {\bf y}' \erasurematrix^{\beta_{j}}) X_{1,*}^{\top} \\
& =   {\bf y} \Ha  X_{1,*}^{\top}+ {\bf y}' \erasurematrix^{\beta_{j}}  X_{1,*}^{\top} \\
& =   {\bf y}' \erasurematrix^{\beta_{j}}  X_{1,*}^{\top}, 
\end{align}
where the last equality follows from \eqref{eq:IRP}.
Finally, let 
\begin{align} \label{eq:CoefVec}
[c_{j,1} , \ldots, c_{j,N}] = {\bf y}' \erasurematrix^{\beta_{j}} .
\end{align}
As the columns of $\erasurematrix^{\beta_{j}}$ indexed by $\bar\beta_{j}$ are zero,  $c_{j,n} = 0$ if $n \not\in \beta_{j}$. Therefore, we prove the theorem. 
\end{proof}

Equation \eqref{eq:thm1} essentially defines how to regenerate the content of a failed node $X_{1,j}$ from $X_{1,n}$ for $n \in \beta_{j}$ (i.e., the nodes in its repair group). In other words, node $X_{1,j}$ is a linear combination of the nodes in its repair group where the coefficients are $c_{j,n}$. In this case, $|\beta_{j}|$ symbols are transmitted to the processing unit  in rack 1, which can then  repair the failed node $X_{1,j}$ by \eqref{eq:thm1}. Clearly, the choice of $\beta_{j}$ will affect the repair cost. It is always desirable to pick $\beta_{j}$ such that its size is as small as possible.

\begin{example}
Let $\Ha$ be the intra-rack parity check matrix over $\FF_3$ such that 
\[
\Ha = 
\left[
\begin{array}{cccc}
0 & 1 & 1 & 1 \\
1 & 0 & 1 & 2
\end{array}
\right].
\]
Assume nodes $X_{1,1}$ and $X_{1,2}$ are failed. Thus, the failure pattern will be $\pattern=\{1,2\}$. Suppose we want to repair node $X_{1,1}$. The repair groups of node $j=1$ is given by
\[
\Omega(\Ha,1)=\Big\{\{3,4\},\{2,3\},\{2,4\}\Big\}.
\]
A repair group $\beta_1 \in \Omega(\Ha,1)$ is eligible for intra-rack repair process such that $\beta_1 \cap \pattern=\emptyset$. Therefore, $\beta_1=\{3,4\}$. Moreover, we choose ${\bf y}=[0~1]$. Then, $\u=\y \Ha = [1~0~1~2]$ and ${\bf y}'=-\u=[-1~0~-1~-2]$ satisfying the conditions in Lemma \ref{lem:lemma1_4}. Therefore, the repair coefficients vector in \eqref{eq:CoefVec} is given by
\[
{\bf y}' \erasurematrix^{\beta_{1}} = [0~~0~-1~-2].
\]
Consequently, 
\[
X_{1,1} =  - X_{1,3} - 2X_{1,4}.
\]
The remaining failure $X_{1,2}$ can also be repaired by the same procedure. 
\end{example}

%%%%%%%%%
\subsection{Inter-Rack Repair}
%%%%%%%%%

Communications across racks in a multi-rack storage network are in general  more expensive. Consider the extreme case where each rack physically represents a data center, each of which is geographically distant from each other. In this case, data transmission across long distance is clearly more expensive than transmission within each rack. Therefore, it is often desirable to design codes such that more repairs can be done locally within racks. 
However, in some rare cases (e.g., burst failure within a rack), nodes failure cannot be repaired locally.  For example, this may occur when  node $X_{m,j}$ fails and for all 
$\beta  \in \Omega(\Ha , j)$, there is at least another node $X_{m,k}$ for $k\in\beta$ which also fails. 
%
%This is equivalent to the scenario where an unrepairable failure pattern occurs within a rack such that \eqref{eq:22} is not satisfied. In other words, there exists $j\in\pattern$ such that $\beta_{j} \cap \pattern \neq \emptyset$ for all $\beta_{j} \in \Omega(\Ha , j)$. 
When intra-rack repair fails, inter-rack repair can be done. The idea is described below. 
%
%In this case, other racks must participate in the repair process.

Let 
$
\h = (h_{1}, \ldots, h_{N})
$, 
$
\k = (k_{1}, \ldots, k_{N})
$
and 
$
\g=(g_{1}, \ldots, g_{M}) 
$
be respectively vectors spanned by the rows of the matrices $\Ha$, $\Hb$ and  $\solG$. Then, it can be verified directly from \eqref{eq:IRP} and \eqref{eq:ARP} that 
\begin{align}
\h X_{m,*}^{\top} &= 0 \\
\k X_{m,*}^{\top} & = -g_{m}^{-1} \sum_{i\in\tau\setminus\{ m\}} \k \:g_{i}X_{i,*}^{\top} 
\end{align}
where $\tau = \{i \in \M:\: g_{i} \neq 0 \}$ and is assumed to contain $m$.
Suppose 
$\beta = \lambda(\h + \k)$
%.
%$$\beta = \{ i \in \N:\: h_{i} + k_{i} \neq 0 \}$$
and $j \in\beta$. Then 
%\begin{multline}\displaystyle
\begin{align}
(h_{j} +k_{j}) X_{m,j}
=  - \sum_{i\in \beta \setminus \{j\}} (h_{i}+k_{i}) X_{m,i} -g_{m}^{-1} \sum_{i\in \tau\setminus \{m\}} \k g_{i}X_{i,*}^{\top}  \label{eq7}
\end{align}
%\end{multline}
can be used to recover $X_{m,j}$. Equation \eqref{eq7} consequently defines the across rack repairs. To be more precise, in order to repair the failed node $X_{m,j}$, one would need 1) code symbols $X_{m,i}$ from the failing rack $m$ for $i \in \beta \setminus \{j\}$, and 2) code symbols $X_{i,\ell}$ from rack $i$ for $i\in \tau\setminus \{m\}$ and $\ell \in \{i \in \N:\: k_{i} \neq 0 \}$. In other words, to repair a failed node $X_{m,j}$ a group of helper racks $\tau$ are identified by parity matrix $\solG$. Also, a group of helper nodes in each helper rack is identified by parity matrix $\Hb$. The helper nodes in each helper rack will send their content to the rack process unit. Each helper rack process unit calculates a linear combination of the helper nodes content and send it to the process unit of the failed rack $m$. The process unit of the failed rack $m$ calculates the sum of this information received from helper racks. A group of survived nodes from the failed rack which are specified by $\Ha$ and $\Hb$ send their content to the rack process unit. The process unit then calculates a linear combination of the information from these nodes and adds it to the information from the helper racks. This results in the information content of the failed node $X_{m,j}$.

\begin{theorem}\label{thm:2}
Suppose node $j$ fails in rack  $m=1$. Let 
 $\pattern_{j}$ be the index set for all failed nodes (hence,  $j \in \pattern_{j}$). 
%Suppose that a failure pattern $\gamma$ is occurred within   rack $m=1$ and that $j \in \gamma$. 
If $(\beta_{j} , \mu_{j} , \r_{j}, \tau )$ satisfies the following criteria, 
\begin{enumerate}

\item \label{item:ARRPcr1}
$\r_{j} \in \langle \Hb\rangle$ 

\item \label{item:ARRPcr2}
$\mu_{j} = \lambda(\r_{j})$  (i.e., $\mu_{j} = \left\{ n  \in \{1, \ldots, N \}: r_{j,n} \neq 0 \right\}$)

\item  \label{item:ARRPcr3}
$\beta_{j} \in \Omega(\Ha, \r_{j} , j)$, and

\item  \label{item:ARRPcr4}
$\beta_{j} \cap \pattern_{j} = \emptyset$,

\item  \label{item:ARRPcr5}
$\tau  \subseteq \{1, \ldots, M \} \in \Omega(\solG, 1)$

\end{enumerate}
then there exists $c_{j,n}$ for $n\in \beta_{j}$ and $d_{j,m,s}$ for $m\in\tau, s\in\mu_{j}$ such that 
\begin{align}\label{eq:thm2}
X_{1,j} = \sum_{m\in\tau} \left( 
                                        \sum_{s\in \mu_{j}} d_{j,m,s} X_{m,s}
  			 	    \right)
	     + \sum_{n \in \beta_{j}} c_{j,n} X_{1,n}.
\end{align}
\end{theorem}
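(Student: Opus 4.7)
The plan is to mirror the strategy used for Theorem \ref{thm1}, but now invoking Lemma \ref{lem:lemma1_4} in its full generality (with $\r \neq \zero$) and then using the inter-rack parity equation \eqref{eq:ARP} together with the helper-rack parity structure to eliminate the term in $X_{1,*}$ that is introduced by $\r_{j}$.

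First I would apply Lemma \ref{lem:lemma1_4} with $\r = \r_{j}$ and $\beta = \beta_{j}$; by criterion \ref{item:ARRPcr3}, this gives vectors $\y$, $\y'$ and a scalar $a \in \FFq$ with
\[
{\bf e}_{j} \;=\; \y \Ha \;+\; a\, \r_{j} \;+\; \y' \erasurematrix^{\beta_{j}}.
\]
Right-multiplying by $X_{1,*}^{\top}$ and using $\Ha X_{1,*}^{\top} = \zero$ from \eqref{eq:IRP}, I obtain
\[
X_{1,j} \;=\; a\, \r_{j} X_{1,*}^{\top} \;+\; \y' \erasurematrix^{\beta_{j}} X_{1,*}^{\top}.
\]
The second term, just as in Theorem \ref{thm1}, is exactly $\sum_{n\in\beta_{j}} c_{j,n}X_{1,n}$ once I set $[c_{j,1},\ldots,c_{j,N}] = \y' \erasurematrix^{\beta_{j}}$, and the coefficients vanish outside $\beta_{j}$; criterion \ref{item:ARRPcr4} then guarantees that none of the $X_{1,n}$ actually used is among the failed nodes.

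Next I would handle the $a\, \r_{j} X_{1,*}^{\top}$ term by moving the computation across racks. Because $\r_{j} \in \langle \Hb \rangle$ by criterion \ref{item:ARRPcr1}, I can regard $\r_{j}$ as (a linear combination of) rows of $\Hb$, so that the inter-rack parity \eqref{eq:ARP} implies, for every row $\g \in \langle \solG \rangle$,
\[
\sum_{m=1}^{M} g_{m}\,\r_{j} X_{m,*}^{\top} \;=\; 0.
\]
Now criterion \ref{item:ARRPcr5} says $\tau \in \Omega(\solG,1)$, so by Lemma \ref{lem:lemma1_4} (applied with $\r=\zero$ to $\solG$) there exists $\g \in \langle \solG\rangle$ with $g_{1}\neq 0$ and $\lambda(\g)\setminus\{1\}=\tau$. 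Solving for the $m=1$ term yields
\[
\r_{j} X_{1,*}^{\top} \;=\; -g_{1}^{-1} \sum_{m \in \tau} g_{m}\, \r_{j} X_{m,*}^{\top},
\]
where the sum ranges only over $m \in \tau$ because $g_{m}=0$ outside $\lambda(\g)$. Since the support of $\r_{j}$ is exactly $\mu_{j}$ by criterion \ref{item:ARRPcr2}, the inner product $\r_{j} X_{m,*}^{\top}$ expands as a linear combination of $\{X_{m,s}: s\in \mu_{j}\}$ only, which gives the coefficients $d_{j,m,s}$ needed in \eqref{eq:thm2}.

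Substituting this back and collecting terms yields the claimed expression. The only real bookkeeping obstacle is making sure the index sets of the surviving coefficients are as stated: the $c_{j,n}$ terms are restricted to $n \in \beta_{j}$ because $\erasurematrix^{\beta_{j}}$ kills everything off $\beta_{j}$, and the $d_{j,m,s}$ terms are restricted to $s \in \mu_{j}$ and $m \in \tau$ because $\lambda(\r_{j})=\mu_{j}$ and $\lambda(\g)\setminus\{1\}=\tau$. The main novelty beyond Theorem \ref{thm1} is the two-stage elimination, first turning $\r_{j} X_{1,*}^{\top}$ into a sum over other racks via $\solG$, and simultaneously ensuring through criteria \ref{item:ARRPcr1}--\ref{item:ARRPcr2} that what finally reaches the failed rack from each helper rack $m \in \tau$ involves only the controlled index set $\mu_{j}$; getting these two supports to line up cleanly is the step that most needs care.
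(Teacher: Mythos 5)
Your proposal is correct and follows essentially the same two-stage argument as the paper: decompose $\mathbf{e}_j$ via Lemma \ref{lem:lemma1_4} with $\r=\r_j$, kill the $\Ha$ term with \eqref{eq:IRP}, then use \eqref{eq:ARP} together with the $\solG$ structure to rewrite $a\,\r_j X_{1,*}^{\top}$ as a sum over the helper racks in $\tau$. The only cosmetic difference is that you isolate the $m=1$ term of a single dual codeword $\g$ of $\solG$ (which follows directly from Definition \ref{def:RepairGroup}, not really from Lemma \ref{lem:lemma1_4} as you cite), whereas the paper decomposes $\mathbf{f}_1 = \mathbf{z}\solG + \mathbf{z}'\erasurematrix^{\tau}$ and shows the $\mathbf{z}\solG$ part annihilates $\X\r_j^{\top}$; the two are algebraically equivalent.
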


The proof of Theorem \ref{thm:2} is given in Appendix \ref{append:A2}.

% ------------- Remark
\begin{remark}
The interpretation of the theorem is as follows: 
The support of  $\r_j \in \langle \Hb\rangle$ corresponds to index of nodes in the ``helper racks''. Clearly, the smaller is the support the better, in order to minimise transmission cost. However, we would also point out that the  transmission costs required to transmit across racks does not depend on the support size of  
$\r_j $. More precisely, for each helper rack, only the sum 
$ \sum_{s\in \mu_{j}} d_{j,m,s} X_{m,s}$ is required to be transmitted, instead of specific individual $X_{m,s}$.
 On the other hand, the set  $\beta_j$ denotes the set of nodes which can be used to repair $X_{1,j}$. Consequently, $\beta_j$ and $\pattern_{j}$ (index set for the failed nodes in rack $1$) must be disjoint. Finally,  $\tau $ is the index set of the helper racks.
Note also that Theorem \ref{thm:2} reduces to Theorem \ref{thm1} if $\tau = \mu_{j}= \emptyset$ and $\r_{j}$ is the zero vector. 

\end{remark}
 
\begin{remark}
As a consequence of Theorem \ref{thm:2}, The processing unit in rack $m$ where $m\in \tau$, will retrieve $|\mu_{j}|$ symbols. The processing unit of rack 1, will need to retrieve $|\beta_{j}|$ symbols within the rack. Also, one symbol transmission is needed for the processing unit of rack 1 to send the recovered symbol back to the failed storage node $X_{1,j}$. Finally, each helper rack indexed in $\tau$ will transmit 1 symbol to the processing unit of rack 1. Summing up all these transmissions, there are in total 1) 
$
|\beta_{j}| + |\mu_{j}||\tau| + 1
$
symbol transmission within racks, and 2) $|\tau|$ symbol transmissions across racks. 
%We need to point out that the above calculation is based on the worst case scenario. 
\end{remark}
 
\def\support{{\bf Supp}}

\begin{example} \label{examp:ARRP}
Consider a rack model storage network with $M=5$ racks, each of which contains $N=8$ storage nodes. Suppose the parity check matrices are as follows:
\[
\H = 
\left[
\begin{array}{cccccccc}
1 & 1 & 1 & 0 & 1 & 0 & 0 & 0 \\
1 & 1 & 0 & 1 & 0 & 1 & 0 & 0 \\
0 & 1 & 1 & 1 & 0 & 0 & 1 & 0 \\
1 & 0 & 1 & 1 & 0 & 0 & 0 & 1 \\
\end{array}
\right],
\]
\[
\Hb = 
\left[
\begin{array}{cccccccc}
1 & 1 & 0 & 1 & 1 & 0 & 0 & 1 \\
0 & 1 & 1 & 0 & 1 & 0 & 1 & 1 \\
\end{array}
\right],
\]
and
\[
\solG = 
\left[
\begin{array}{ccccc}
1 & 1 & 1 & 1 & 0 \\
0 & 1 & 1 & 1 & 1 \\
1 & 1 & 0 & 1 & 1 \\
\end{array}
\right].
\]

Note that the code is over $GF(2)$. Suppose node $X_{1,1}$ fails. Then  by Definition \ref{def:RepairGroup},  it can be verified that 
\begin{multline*}
\Omega(\H,1) = \Big\{\{3,4,8\}, \{2,7,8\}, \{2,4,6\}, \{3,6,7\}, \{2,3,5\}, \\
\{4,5,7\}, \{5,6,8\}, \{2,3,4,5,6,7,8\}\Big\}. 
\end{multline*}
In particular, any subset of nodes in rack $1$ indexed by $\Omega(\H,1)$ can be  used to repair $X_{1,1}$.

Now, suppose that the following nodes   $ \{X_{1,1}, X_{1,2}, X_{1,4}, X_{1,6}\}$ failed in  rack 1 (i.e., $\pattern = \{1,2,4,6 \}$). In this case, $X_{1,1}$ cannot be repaired via intra-rack repair  since there exist no intra-rack repair group $\beta_1 \in \Omega(\H,1)$ such that $\beta_1 \cap \gamma=\emptyset$. Therefore, inter-rack repair is needed. Let $\r_1=\left[0 \quad 1 \quad 1 \quad 0 \quad 1 \quad 0 \quad 1 \quad 1\right]$ and hence $\mu_1= \{2, 3, 5, 7, 8\}$  following from criterion \ref{item:ARRPcr1} and \ref{item:ARRPcr2}, respectively. Moreover, let $\beta_1 = \{7,8\}$ and $\tau = \{2,3,4\}$ following from criteria \ref{item:ARRPcr3}--\ref{item:ARRPcr5}, respectively. Note that, $\tau$, $\mu_1$, and $\beta_1$ indicate the group of helper racks, the group of helper nodes in the helper racks, and a repair group in rack 1 which will participate in repairing the failed node $X_{1,1}$.
 
Choose ${\bf y} = [1 \quad 0 \quad 0 \quad 0]$, ${\bf y}'=\left[0 \quad 0 \quad 0 \quad 0 \quad 0 \quad 0 \quad 1 \quad 1\right]$, ${\bf z} = [1 ~ 0 ~ 0]$, ${\bf z}' = [0 ~ 1 ~ 1~ 1~ 0]$, and $a=1$ such that 
\begin{align} \label{eq:e1}
{\bf e}_{1} &= {\bf y} \Ha + {\bf y}' \erasurematrix^{\beta_{1}} + a  \r_{1}, \\ \label{eq:f1}
{\bf f}_1 & = {\bf z}  \solG  + {\bf z}' \erasurematrix^{\tau},
\end{align}
are satisfied where \eqref{eq:e1} and \eqref{eq:f1} follow from Lemma \ref{lem:lemma1_4}, and ${\bf f}_1$ is defined in \eqref{eq:fVector}. Following from \eqref{eq:cVector} and \eqref{eq:dmatrix}, we have
\[
[c_{1,1} , \ldots, c_{1,8}] = 
\left[
\begin{array}{cccccccc}
0 & 0 & 0 & 0 & 0 & 0 & 1 & 1
\end{array}
\right],
\]
and
\[
\left[
\begin{array}{ccc}
d_{1,1,1} & \cdots & d_{1,1,8} \\
\vdots & \ddots &\vdots \\
d_{1,5,1} & \cdots, & d_{1,5,8}  \\
\end{array}
\right] =
\left[
\begin{array}{cccccccc}
0 & 0 & 0 & 0 & 0 & 0 & 0 & 0 \\
0 & 1& 1 & 0 & 1 & 0 & 1 & 1 \\
0 & 1& 1 & 0 & 1 & 0 & 1 & 1 \\
0 & 1& 1 & 0 & 1 & 0 & 1 & 1 \\
0 & 0 & 0 & 0 & 0 & 0 & 0 & 0
\end{array}
\right]. 
\]
Now, the failed node $X_{1,1}$ can be recovered by \eqref{eq:thm2} such that
\begin{align} \nonumber
X_{1,1} & = \sum_{m\in\{2,3,4\}} \left( 
                                        \sum_{s\in \{2,3,5,7,8\}} d_{1,m,s} X_{m,s}
  			 	    \right)   + \sum_{n \in \{7,8\}} c_{1,n} X_{1,n}.
\end{align}

In this example a total number of 18 symbol transmissions within the racks and 3 symbol transmissions across the racks are needed to repair the failed node $X_{1,1}$.  
\end{example}

When we need to repair multiple failed nodes (say $|\pattern_{j}|$ of them) via inter-rack repair, the cost is not simply $|\pattern_j|$ times:  First, it is possible that a transmission from inter-rack can be used to repair for more than one node. Second, once inter-rack repair has been achieved, nodes which are previously not repairable via intra-rack repair may become repairable. 
To be more precise, suppose nodes $(X_{1, j} , j \in \pattern)$ fail where $|\pattern| \ge  \dist(\Ha) $. In this case, nodes failure may not be able to be recovered merely via intra-rack repairs. Let $\alpha\subseteq \pattern$ be of size $\dist(\Ha)  - 1$. In that case, 
in the worst case scenario, one can at least aim to recover variables $X_{1, j}$ for $j \in \pattern\setminus\alpha$ via inter-rack repair first. Once this is achieved, the remain nodes 
$(X_{1, j},  j \in  \alpha)$ can be recovered via intra-rack repair. 
Following the idea, the following theorem gives an upper bound on the repair transmission cost.

\begin{theorem}[Upper bound on transmission costs] \label{thm:cost}
Let $\pattern$ be the index set of all failed nodes in rack $1$. 
Suppose that 1) for any $j \in \pattern\setminus \alpha$, there exists $(\beta_{j}, \mu_{j} , \r_{j}, \tau)$ satisfying the criteria in Theorem \ref{thmLPbd} where $\pattern_{j} \triangleq \pattern$ and 2) for any $j \in \alpha$, there exists $\beta_j$ satisfying the criteria in Theorem \ref{thm1} where $\pattern_{j} \triangleq \alpha$.
% be a subset of $\pattern$ such that $|\alpha|=\lmd-1$.  and $\beta_i$ for inside rack repair process are selected to repair the nodes $(X_{1,j} \text{~for~} j \in \pattern \setminus \alpha)$ and $(X_{1,i} \text{~for~} i \in \alpha)$,respectively. 
% 
 Then the required total transmissions within a rack  $\theta_{intra}$ and across  racks $\theta_{inter}$ are respectively upper bounded by  
\begin{align}
\theta_{intra} &\leq |\tau| \left| \bigcup_{j\in\pattern\setminus\alpha}\mu_{j}\right| + 
 \left|\bigcup_{j \in \pattern }\beta_{j} \right|   + |\pattern|  \label{thm3:eqa}\\
\theta_{inter} & \leq |\tau| \dim \langle \r_j , j\in \pattern\setminus\alpha \rangle. \label{thm3:eqb}
\end{align}
\end{theorem}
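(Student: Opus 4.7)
The plan is to bound the two transmission counts separately by carefully accounting for which symbols can be shared or reused across the repairs of the $|\pattern|$ failed nodes, since the bound is non-trivial precisely because repairs are not independent. I will treat the repair schedule as follows: first use Theorem \ref{thm:2} to recover $X_{1,j}$ for every $j \in \pattern\setminus\alpha$ via inter-rack repair, then use Theorem \ref{thm1} to recover $X_{1,j}$ for $j \in \alpha$ via intra-rack repair. This ordering is consistent with the hypotheses ($\pattern_j = \pattern$ in the first phase, $\pattern_j = \alpha$ in the second), and it is what justifies the union structure in \eqref{thm3:eqa}.

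For the intra-rack bound \eqref{thm3:eqa} I would separate the three physical sources of within-rack traffic. First, for each helper rack $m \in \tau$, the processing unit needs only the symbols $X_{m,s}$ with $s \in \mu_j$ that appear in \eqref{eq:thm2}, across all $j \in \pattern\setminus\alpha$. A node only has to transmit its content to its own PU \emph{once}, regardless of how many failed nodes it helps repair, so this contributes at most $|\tau|\cdot\bigl|\bigcup_{j\in\pattern\setminus\alpha}\mu_j\bigr|$ transmissions. Second, the PU of rack $1$ needs each survivor $X_{1,n}$ appearing as an in-rack helper, indexed by $n \in \beta_j$ for some $j \in \pattern$ (including both phases); again each such node transmits only once, giving $\bigl|\bigcup_{j\in\pattern}\beta_j\bigr|$. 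Third, once the PU of rack $1$ has reconstructed a failed symbol it must write it back to (the replacement of) the failed node, for one transmission per failure, contributing $|\pattern|$. Summing yields \eqref{thm3:eqa}.

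For the inter-rack bound \eqref{thm3:eqb}, the key observation is that what each helper rack actually needs to send across the expensive link is the scalar $\r_j X_{m,*}^\top$ appearing in the ``$\sum_{s\in\mu_j}d_{j,m,s}X_{m,s}$'' block of \eqref{eq:thm2}, and these scalars for different $j$ are linear functions of $X_{m,*}$ with coefficient vectors $\r_j$. If we fix a basis of $\langle \r_j : j \in \pattern\setminus\alpha\rangle$ of size $d = \dim\langle \r_j : j \in \pattern\setminus\alpha\rangle$, then helper rack $m$ only needs to transmit the $d$ corresponding scalar linear combinations; the PU of rack $1$ can then reconstruct $\r_j X_{m,*}^\top$ for every $j$ as a known linear combination, because $\r_j$ lies in that span. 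This holds in every one of the $|\tau|$ helper racks, yielding $\theta_{inter} \leq |\tau|\, d$.

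The main obstacle — and the reason the bound is stated as an \emph{inequality} rather than an equality — is being precise that these accountings are achievable and not double-counted. In particular, one must check that the reuse arguments above are compatible with the formula \eqref{eq:thm2}: the repair of $X_{1,j}$ uses only linear combinations of $\{X_{m,s}\}_{s\in\mu_j}$ and of $\{X_{1,n}\}_{n\in\beta_j}$, so letting nodes transmit their raw symbols once lets the PU form every required combination locally, and letting each helper rack transmit a basis of $\{\r_j X_{m,*}^\top\}_j$ lets rack $1$ form the needed scalar for each $j$. Apart from this bookkeeping, no new algebra beyond Theorems \ref{thm1} and \ref{thm:2} is required.
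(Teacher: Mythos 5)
Your proof is correct and follows essentially the same accounting as the paper: you decompose $\theta_{intra}$ into helper-rack node-to-PU traffic (reused across $j$, giving the union $\bigcup_{j}\mu_j$ scaled by $|\tau|$), rack-1 node-to-PU traffic (again a union, $\bigcup_{j}\beta_j$), and writeback traffic ($|\pattern|$), and you bound $\theta_{inter}$ by having each helper rack transmit only a basis of $\{\r_j X_{m,*}^{\top}\}_j$, exactly the paper's dimension argument. The only difference is presentational: you make explicit the observation that a surviving node need transmit its raw symbol to its PU only once regardless of how many failed nodes it helps, which the paper leaves implicit when invoking the unions.
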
 

The proof of Theorem \ref{thm:cost} is given in Appendix \ref{append:A3}.

%%%%%%%
\subsection{Code Rate} 
%%%%%%%%
In this subsection we derive the rate of our proposed code for multi-rack storage networks. The code rate will later be employed to establish the upper bound of the code size in Section \ref{Sec:V}. The following theorem gives the rate of the multi-rack storage code.

\begin{theorem} \label{thm:CodeRate}
Let $\Ha$, $\Hb$, and $\solG$ be respectively $S_{1} \times N$, $S_{2} \times N$, and $L \times M$ matrices. Then the rate of the multi-rack storage code 
$(\Ha, \Hb, \solG)$ is lower bounded by 
\begin{align} \label{eq:RackCodeRate1}
R \geq \frac{MN - MS_{1} - LS_{2}}{MN}.
\end{align}
Equality holds if rows in $\Ha$ and $\Hb$ are linearly independent, and $\solG$ is a full rank matrix.
\end{theorem}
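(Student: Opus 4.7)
The plan is to flatten the matrix parity-check system into a single vectorized system, compute (or upper bound) its rank, and then invoke Lemma \ref{lemm:SolNum} to count solutions.

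First, I would vectorize $\X$ by stacking its rows into a column vector $\x \in \FFq^{MN}$ indexed by pairs $(m,n)$. Then \eqref{eq:IRP} becomes an intra-rack system that may be written in block-diagonal form as $(I_M \otimes \Ha)\x = \zero$, which contributes $MS_1$ scalar equations, while \eqref{eq:ARP} becomes the tensor-product system $(\solG \otimes \Hb)\x = \zero$ contributing $LS_2$ scalar equations. Stacking these gives a single parity-check matrix
\begin{align*}
\cbP \triangleq \begin{bmatrix} I_M \otimes \Ha \\ \solG \otimes \Hb \end{bmatrix}
\end{align*}
of size $(MS_1 + LS_2) \times MN$ that completely defines the code: the codewords are exactly the vectors in its null space. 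By Lemma \ref{lemm:SolNum}, after extracting a maximal linearly independent subset of rows, the number of codewords equals $q^{MN - \textnormal{rank}(\cbP)}$, so the rate is $(MN - \textnormal{rank}(\cbP))/(MN)$.

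For the inequality, I would simply observe the trivial bound $\textnormal{rank}(\cbP) \leq MS_1 + LS_2$, from which \eqref{eq:RackCodeRate1} follows immediately. The interesting half is the equality claim, which requires showing that under the stated hypotheses the rows of $\cbP$ are in fact all linearly independent. I would analyze a hypothetical zero-combination $\sum_{m,s} \alpha_{m,s}\, (\text{row } (m,s) \text{ of } I_M\otimes\Ha) + \sum_{l,s} \beta_{l,s}\, (\text{row } (l,s) \text{ of } \solG\otimes\Hb) = \zero$ by reading off, for each fixed rack index $m'$, the coefficient of every $x_{m',n'}$; this yields a linear dependence of the form
\begin{align*}
\sum_s \alpha_{m',s}\,\Ha_{s,*} + \sum_s \Bigl(\sum_l \beta_{l,s}\, g_{l,m'}\Bigr)\Hb_{s,*} = \zero.
\end{align*}
The hypothesis that the rows of $\Ha$ and $\Hb$ are jointly linearly independent then forces all $\alpha_{m',s}=0$ and forces $\sum_l \beta_{l,s}\, g_{l,m'} = 0$ for every $(m',s)$; the hypothesis that $\solG$ has full rank (so its left null space is trivial) then forces all $\beta_{l,s}=0$. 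Hence $\textnormal{rank}(\cbP) = MS_1 + LS_2$ exactly, and the rate bound is tight.

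The only real obstacle is the rank computation in the equality case; once the tensor-product reformulation is in place, the argument is a bookkeeping exercise that cleanly separates into the ``$\Ha$ vs.\ $\Hb$ independence'' part and the ``$\solG$ full rank'' part, matching the two hypotheses exactly. I would present the tensor-product reformulation as the key observation and then carry out the row-independence argument block-by-block over the rack index $m'$.
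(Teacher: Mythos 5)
Your proof is correct, but it takes a genuinely different route from the paper's. The paper proves the equality case by a fibered counting argument: for each rack $m$ it introduces the set $\S_m(\b_m)$ of vectors $X_{m,*}^{\top}$ with $\Ha X_{m,*}^{\top}=\zero$ and $\Hb X_{m,*}^{\top}=\b_m$, applies Lemma~\ref{lemm:SolNum} to the stacked matrix $\begin{bmatrix}\Ha\\\Hb\end{bmatrix}$ to get $|\S_m(\b_m)|=q^{N-S_1-S_2}$ per rack, and then counts the number of admissible tuples $\mathbf{B}=[\b_1,\ldots,\b_M]$ (those with $\mathbf{B}\solG^{\top}=\zero$) as $q^{(M-L)S_2}$ using full rank of $\solG$; multiplying the two counts gives $|\mathcal{S}|=q^{MN-MS_1-LS_2}$. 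You instead vectorize the whole system, identify the global parity-check matrix as $\cbP=\begin{bmatrix}I_M\otimes\Ha\\\solG\otimes\Hb\end{bmatrix}$, and show directly that $\mathrm{rank}(\cbP)=MS_1+LS_2$ by a block-by-block row-independence argument. Both routes read the hypothesis ``rows in $\Ha$ and $\Hb$ are linearly independent'' as joint independence of the $S_1+S_2$ rows of the stacked matrix (as the paper implicitly does when it invokes Lemma~\ref{lemm:SolNum} with $k=S_1+S_2$), and both invoke full rank of $\solG$ in the same logical role (trivial left null space). The paper's argument is more elementary and keeps the rack structure explicit, which makes it easier to read off the constant $|\Delta|=q^{(M-L)S_2}$; your Kronecker-product reformulation is more structural, immediately yields the inequality from the trivial rank bound, and is perhaps the more transparent way to see exactly why the two hypotheses are each necessary and sufficient for the two halves of the dependence argument. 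Your vectorization is consistent (vec$(\Ha\X^{\top}I_M)=(I_M\otimes\Ha)\x$ and vec$(\Hb\X^{\top}\solG^{\top})=(\solG\otimes\Hb)\x$ for $\x=\mathrm{vec}(\X^{\top})$), so the reformulation is sound.
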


The proof of Theorem \ref{thm:CodeRate} is given in Appendix \ref{append:A4}.

%%%%%%%%%%%%%%%%%%%%%%%%%%%%%%%%%%%%%%%%%%%%%%%%%%%%%
\section{Bounds} \label{Sec:V}%\label{Sec:VI}
%%%%%%%%%%%%%%%%%%%%%%%%%%%%%%%%%%%%%%%%%%%%%%%%%%%%%

In this section, we first derive the relations between the code rate, code size, and the size of the parity check matrices. We show that under some constraints, maximising the code rate is equivalent to maximising the size of the code which in turn can determine the optimum size of the parity check matrices. We define the multi-rack storage network parameters such as intra- and inter-rack resilience and locality in order to establish a \emph{Linear Programming (LP)} problem to maximise the size of the multi-rack storage code. Then, the symmetries in the problem will be used to significantly reduce the complexity of the LP problem.

\subsection{Linear Programming Bound} \label{subsec:IVA}

In the previous section, we introduced a class of storage codes called multi-rack storage codes and explained how to repair nodes failure via intra-rack or inter-rack repairs. In this section, we will develop   bounds for this class of codes. 
Recall our code construction and definition. We will notice the following:
\begin{enumerate}
\item 
The intra-rack parity check matrix $\Ha$ or more precisely the support of the dual codewords spanned by the rows of $\Ha$ determines how failed nodes can be repaired locally within a rack. Alternatively,  the dual codewords spanned by intra-rack parity check matrix $\Ha$ and inter-rack parity check matrix $\Hb$ together defines the inter-rack repair process.  

\item 
The helper-rack parity check matrix $\solG$ specifies which racks can be used in the inter-rack repair process.  Naturally, one would prefer to involve only a small number of racks to minimize the inter-rack transmission cost. 

%We should also point out that in the across rack repair process, the support of the dual codewords spanned by the rows of $\Hb$ identifies which nodes in the helping racks needed to be used during the repair.   
\end{enumerate}

%As a consequence of Theorem \ref{thm1}, we establish a linear programming bound for multi-rack storage codes. 
Assuming without loss of generality that all  rows of   $\Ha$ and  $\Hb$ are  independent and that $\solG$ is full rank,
Theorem \ref{thm:CodeRate} shows that 
\begin{align}\label{eq24}
R(\Lambda_{\C}) = \frac{MN - MS_{1} - LS_{2}}{MN}, 
\end{align}
or equivalently, 
\begin{align}\label{eq25}
R(\Lambda_{\C}) = \frac{ N  - S_{1} - S_{2} }{N}+\frac{S_{2}(M-L)}{MN}.
\end{align}
The rate of the multi-rack storage code is essentially determined by the size of the matrices $\Ha, \Hb$ and $\solG$. 

Understanding their roles, we can immediately recognise that one can separately design $\solG$ and $(\Ha, \Hb)$. The design of $\solG$ will affect the number of helper racks needed in inter-rack repair.  In fact, it is  very similar to the design of locally repairable codes. The idea is to design a linear code (specified by the parity check matrix $\solG$) such that for any $m \in \M$, there are dual codewords $\g$ with a small support containing $m$. If the racks are geographically separated and connected to a network, the design of $\solG$ may take into account the network topology and the costs of the transmission links. For example, an algorithm for the design of linear binary locally repairable codes over a network can be found in \cite{LRCTopo}. There are also previous works including our own work in \cite{LPLR} and \cite{RLLC} which discuss the design and bounds for locally repairable codes.
On the other hand, the design of $\Ha$ and $\Hb$ will affect the code's ability in intra-rack and inter-rack repair. The focus of the remaining paper is on understanding the fundamental limits of the best design of these two matrices.

Separating the design of $(\Ha, \Hb)$ from $\solG$,  we can simply consider a   simple special case where  there are only two racks (i.e., $M = 2$) and that 
$$
\solG = [1,-1].
$$
Assume without loss of generality, we may characterise our multi-rack storage code via the following parity-check equations:
\begin{align} 
\Ha \X_{1,*}^{\top}  & = \zeroz  \\
\Ha \X_{2,*}^{\top}  & = \zeroz  \\
\Hb \X_{1,*}^{\top} & = \Hb \X_{2,*}^{\top}.
\end{align}
As such, we will simply refer a multi-rack storage code as $(\Ha,\Hb)$.

\begin{definition}\label{def2}
We call a multi-rack storage code $(\Ha,\Hb)$ as a  
$(\delta_1, \Gamma_{1}, r_1, \delta_2, \Gamma_{2}, r_2, a)$ linear multi-rack storage code  if it satisfies the following conditions:
\begin{enumerate}
\item (Intra-rack resilience)  
Any $\delta_1$ node failures in a rack can be repaired via intra-rack repair;

\item (Intra-rack  locality) 
For any $\Gamma_{1} + 1$ node failure pattern in a rack,  each node can be repaired via intra-rack repair, involving at  most $r_{1}$ surviving nodes;

\item (Inter-rack resilience) Any $\delta_2$ node failures in a rack can be repaired via inter-rack repair.

\item (Inter-rack locality) For any $\Gamma_{2} + 1$ node failure pattern in a rack,  each node can be repaired via inter-rack repair such that involving $i$) at most  $r_{2}$ surviving nodes in the failing rack and  $ii$) at most  $a$ nodes from each   helper rack.

\end{enumerate}
\end{definition}

% ------------------------------------------------------ 
\def\f{{\bf f}} 
\def\g{{\bf g}} 

The definition for $(\delta_1, r_1, \Gamma_{1}, r_2, \Gamma_{2}, a,  \delta_2)$ linear multi-rack storage code  
can be made more precise via the use of support enumerators, to be described as follows.

To simplify our notation, we may use $\bx$ and $\by$ instead of $X_{1,*}^{\top}$ and $X_{2,*}^{\top}$.  
Let 
\begin{align} \label{eq:TwoRackCode}
\C = 
\{
(\bx, \by) : \Ha \bx = \Ha \by = {\bf 0}, \: \Hb \bx = \Hb \by
\}.
\end{align}
We call $\C$ the codebook. Clearly, the design of $\C$ and the design of $(\Ha,\Hb)$ are equivalent.

\begin{definition}[Support]
Consider any codeword $(\bx, \by) \in \C$ where 
$\bx=(x_{1},\ldots, x_{N})$ and $\by=(y_{1},\ldots, y_{N})$. Its "support" $\lambda(\bx,\by)$ is a tuple $(\w , \s)$ such that $\w = (w_1, \ldots, w_N)$ and $\s = (s_1, \ldots, s_N)$, where
\begin{align}
w_{i}  = 
\begin{cases}
1 & \text{ if } x_{i} \neq 0 \\
0 & \text{ if } x_{i} = 0  
\end{cases} \label{eq12}\\
s_{i}  = 
\begin{cases}
1 & \text{ if } y_{i} \neq 0 \\
0 & \text{ if } y_{i} = 0,
\end{cases}\label{eq13}
\end{align}
for all $i=1, \ldots, N$. For notation simplicity, we will simply denote that 
$$
\lambda(\bx,\by) = (\w,\s).
$$
\end{definition}

\begin{remark}
While $\w$ and $\s$ are subsets of $\N$, it is sometimes simpler and more practical to represent them as vectors, as in \eqref{eq12} and \eqref{eq13}.
\end{remark}

\begin{definition}[Support enumerator]
The enumerator function of the code $\Lambda_\C(\w ,\s)$ is defined as
\begin{align}
\Lambda_\C(\w,\s)
\triangleq \left|\left\{(\bx,\by) \in \C : \lambda(\bx,\by) = (\w,\s)\right\}\right|
\end{align}
for all $\w, \s  \subseteq \N$.
\end{definition}
   
%%%%%%%%%%%%%%%%%%

The below theorem gives properties of a multi-rack storage code. As we shall see, these properties will form constraints in our linear programming bound.

\begin{theorem}\label{thm5}
For any  $(\delta_1, \Gamma_{1}, r_1, \delta_2, \Gamma_{2}, r_2, a)$ multi-rack storage code $\C$,   
the support enumerators of $\C$ and its dual $\Cd$ satisfy the following properties:
\begin{enumerate}
\item \label{item:RackModelCodeProperty1}
{\bf{\emph{Dual codeword support enumerator:}}}
\begin{align}
%\begin{multline}
%\displaystyle
%
\Lambda_{\Cd}(\w ,\s) 
%\\ 
= \frac{1}{|\C|} \sum_{\w' , \s' \subseteq \N} \Lambda_\C(\w', \s')   \prod_{j\in \N}  \kappa_q (w'_j ,w_j)\kappa_q (s'_j,s_j) \label{eqThm:a}
\end{align}
%\end{multline}
where
\[
\kappa_{q}\left(u,v\right)=
\begin{cases}
 1 & \text{ if } v=0 \\ 
 q-1 & \text{ if } u=0  \text{ and }  v=1 \\ 
 -1 & \text{ otherwise. } 
\end{cases}  
\]

\item \label{item:RackModelCodeProperty2}
{\bf{\emph{Symmetry:}}} 
For all $\w,\s \subseteq \N$,
\begin{align} \label{eqThm:b}
\Lambda_\C(\w, \s) = \Lambda_\C( \s,\w).  
\end{align}

%--------------------------------------------------------
\item \label{item:RackModelCodeProperty3}
{\bf{\emph{Intra-rack  resilience:}}}
For all $\w \subseteq \N$ such that   $1 \le |\w| \le \delta_1$
\begin{align}\label{eqThm:c}
\Lambda_\C(\w, \s) = 0
\end{align}

%-------------------------------------------------
\item \label{item:RackModelCodeProperty5}
{\bf{\emph{Intra-rack locality:}}}
For any 
$
(i,\gamma) \in \Phi(\Gamma_{1})\triangleq \{
(i,\gamma) : \: i \in \N, i \not \in \gamma \text{ and } |\gamma| = \Gamma_{1}\}
$, 
\begin{align}\label{eqThm:d}
\displaystyle 
\sum_{\w \in \Theta_{1}(i,\gamma,r_1)} 
\Lambda_{\Cd}(\w,\emptyset) \geq (q-1), 
\end{align}
where 
$
\Theta_{1}(i,\gamma,r_{1}) 
\triangleq 
\{ 
\w  : \: i\in\w, \w \cap  \gamma  = \emptyset  \text{ and } |\w| \le r_{1}+1
\}.
$

%--------------------------------------------------
\item \label{item:RackModelCodeProperty4}
{\bf{\emph{Inter-rack resilience:}}}
For all $\w \subseteq \N$ such that   $1 \le |\w| \le \delta_2$
\begin{align}\label{eqThm:e}
\Lambda_\C(\w, \emptyset)  = 0
\end{align}

%--------------------------------------------------
\item \label{item:RackModelCodeProperty6}
{\bf{\emph{Inter-rack locality:}}} 
For any 
$
(i,\gamma) \in \Phi(\Gamma_{2})\triangleq \{
(i,\gamma) : \: i \in \N, i \not \in \gamma \text{ and } |\gamma| = \Gamma_{2}\}
$, 
\begin{align}\label{eqThm:f}
\displaystyle 
\sum_{ (\w,\s) \in  \Theta_{2}(i,\gamma, r_{2} , a) } 
\Lambda_{\Cd}(\w,\s) \geq (q-1),  
\end{align}
where
$
\Theta_{2}(i,\gamma, r_{2} , a)
% \\
 \triangleq 
\{ 
(\w,\s)  : \: i\in \w, \w \cap  \gamma  = \emptyset , 
|\w| \le r_{2}+1 \text{ and } |\s | \le a
\}.
$
\end{enumerate}
\end{theorem}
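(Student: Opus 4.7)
The plan is to prove the six properties in three groups: structural identities (parts \ref{item:RackModelCodeProperty1} and \ref{item:RackModelCodeProperty2}), non-existence conditions forced by resilience (parts \ref{item:RackModelCodeProperty3} and \ref{item:RackModelCodeProperty4}), and existence conditions forced by locality (parts \ref{item:RackModelCodeProperty5} and \ref{item:RackModelCodeProperty6}). Throughout I will use the fact that the dual code of $\C$ in \eqref{eq:TwoRackCode} is generated by the rows of $[\Ha\mid\zero]$, $[\zero\mid\Ha]$ and $[\Hb\mid-\Hb]$, so a general element of $\Cd$ has the form $(\y_1\Ha+\y_3\Hb,\;\y_2\Ha-\y_3\Hb)$.

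For the symmetry (part \ref{item:RackModelCodeProperty2}), the map $(\bx,\by)\mapsto(\by,\bx)$ is a bijection on $\C$, since all three defining equations $\Ha\bx=\zero$, $\Ha\by=\zero$ and $\Hb\bx=\Hb\by$ are invariant under the swap. It sends supports $(\w,\s)$ to $(\s,\w)$, yielding \eqref{eqThm:b}. For intra-rack resilience (part \ref{item:RackModelCodeProperty3}), the hypothesis of Definition \ref{def2} together with Theorem \ref{thm1} forces $\mathrm{Dist}(\Ha)>\beta_1$, so any nonzero $\bx\in\ker\Ha$ has Hamming weight at least $\beta_1+1$; hence $\Lambda_\C(\w,\s)=0$ whenever $1\le|\w|\le\beta_1$. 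For inter-rack resilience (part \ref{item:RackModelCodeProperty4}), a codeword $(\bx,\zero)\in\C$ forces $\Hb\bx=\Hb\zero=\zero$ in addition to $\Ha\bx=\zero$, so $\bx\in\ker\Ha\cap\ker\Hb$; the $\beta_2$-resilience guarantee implies that the stacked parity-check code has minimum distance exceeding $\beta_2$, so $\Lambda_\C(\w,\emptyset)=0$ for $1\le|\w|\le\beta_2$.

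For the locality lower bounds, I will exhibit at least one nonzero dual codeword with support in the target set and then scale it. For part \ref{item:RackModelCodeProperty5}, intra-rack locality combined with Definition \ref{def:RepairGroup} supplies, for each $(i,\gamma)\in\Phi(\Gamma_1)$, a vector $\h\in\langle\Ha\rangle$ with $i\in\lambda(\h)$, $\lambda(\h)\setminus\{i\}\cap\gamma=\emptyset$ and $|\lambda(\h)|\le r_1+1$; then $(\h,\zero)\in\Cd$, its support is $(\lambda(\h),\emptyset)\in\Theta_1(i,\gamma,r_1)$, and its $q-1$ nonzero $\FFq^*$-multiples share this support, giving \eqref{eqThm:d}. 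For part \ref{item:RackModelCodeProperty6}, inter-rack locality together with Theorem \ref{thm:2} (specialised to $\tau=\{1,2\}$) supplies $(\beta_i,\mu_i,\r_i)$ with $\r_i=\u\Hb\in\langle\Hb\rangle$, $|\mu_i|\le a$, and $\h'=\y_1\Ha+c\r_i$ satisfying $i\in\lambda(\h')$, $\lambda(\h')\setminus\{i\}=\beta_i$, $\beta_i\cap\gamma=\emptyset$, $|\beta_i|\le r_2$; then the combination $(\y_1\Ha+c\u\Hb,\;-c\u\Hb)=(\h',-c\r_i)$ is a genuine element of $\Cd$ whose support $(\w,\s)$ lies in $\Theta_2(i,\gamma,r_2,a)$, and again scaling by $\FFq^*$ yields $q-1$ dual codewords.

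The substantive step is the MacWilliams-type identity in part \ref{item:RackModelCodeProperty1}, which I would prove by discrete Fourier analysis on $\FFq^{2N}$. Using a non-trivial additive character $\chi$ of $\FFq$ and the orthogonality relation $\mathbb{1}\bigl[(\bu,\bv)\in\Cd\bigr]=\frac{1}{|\C|}\sum_{(\bx,\by)\in\C}\chi(\bx\cdot\bu+\by\cdot\bv)$, I would write
\begin{align*}
\Lambda_{\Cd}(\w,\s)=\frac{1}{|\C|}\sum_{(\bx,\by)\in\C}\;\sum_{\substack{\lambda(\bu)=\w\\\lambda(\bv)=\s}}\chi(\bx\cdot\bu+\by\cdot\bv),
\end{align*}
and factorise the inner double sum across the $2N$ coordinates. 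Each coordinate contributes $1$ when the dual coordinate is forced to zero, $q-1$ when the dual coordinate is nonzero but the primal is zero (since $\sum_{a\in\FFq^*}1=q-1$), and $-1$ when both are nonzero (since $\sum_{a\in\FFq^*}\chi(ab)=-1$ for $b\ne0$). These are exactly the four values of $\kappa_q$. Grouping $(\bx,\by)\in\C$ by their support $(\w',\s')$ then yields \eqref{eqThm:a}. The main obstacle, and the only genuinely delicate point, is keeping the bookkeeping of the four cases of $\kappa_q$ consistent with the convention that the first argument tracks the primal codeword's support and the second tracks the dual's; once this is done, the argument is a two-factor extension of the standard support-enumerator MacWilliams identity and offers no new conceptual difficulty.
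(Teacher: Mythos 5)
Your proposal is correct and follows essentially the same plan as the paper's proof: symmetry via the swap bijection, resilience via minimum-distance arguments on $\ker\Ha$ and $\ker\Ha\cap\ker\Hb$ (the paper phrases these as proofs by contradiction about indistinguishable codewords, but the content is the same), and locality via exhibiting a nonzero dual codeword with the prescribed support and scaling it through $\FF_q^*$. The one place you go beyond the paper is Part 1, where you actually derive the support-enumerator MacWilliams identity from character orthogonality and coordinate-wise factorisation, whereas the paper simply cites the classical MacWilliams result; your derivation is the standard one and is sound, including the bookkeeping of $\kappa_q$'s four cases, but it is filling in a citation rather than a genuinely different route.
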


The proof of Theorem \ref{thm5} is given in Appendix \ref{append:A}.
%%%%

\begin{lemma} Consider a multi-rack storage code $(\Ha, \Hb , \solG)$. Suppose the dimensions of the matrices $\Ha, \Hb , \solG$ are respectively $S_{1} \times N$, $S_{2} \times N$, and $L\times M$. Then 
\begin{align}\label{eq62}
N-S_{1}-S_{2} = \log_{q} \sum_{\w\subseteq\N}\Lambda_{\C}(\w,\emptyset) 
\end{align}
and
\begin{align}\label{eq63}
S_{2} =  \log_{q} \sum_{\w,\s\subseteq\N}\Lambda_{\C}(\w,\s)  - 2 \log_{q} \sum_{\w\subseteq\N}\Lambda_{\C}(\w,\emptyset).
\end{align}
Hence, the rate of the code is 
\begin{multline} \label{eq:CodeRateCodeSize}
R(\Lambda_{\C}) = \frac{  \log_{q} \sum_{\w\subseteq\N}\Lambda_{\C}(\w,\emptyset)}{ N} 
\\
+ 
\frac{M-L}{MN} \left( 
\log_{q} \sum_{\w,\s\subseteq\N}\Lambda_{\C}(\w,\s)  - 2 \log_{q} \sum_{\w\subseteq\N}\Lambda_{\C}(\w,\emptyset)
\right).
\end{multline}
\end{lemma}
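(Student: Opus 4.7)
The plan is to interpret the two sums $\sum_{\w}\Lambda_{\C}(\w,\emptyset)$ and $\sum_{\w,\s}\Lambda_{\C}(\w,\s)$ as cardinalities of explicit sets of codewords, compute these cardinalities using Lemma \ref{lemm:SolNum}, and then plug the results into equation \eqref{eq25} of Theorem \ref{thm:CodeRate}. Throughout, I rely on the fact that the enumerator is defined for the two-rack codebook $\C$ in \eqref{eq:TwoRackCode}, and that we may assume without loss of generality that the rows of $\Ha$ and the rows of $\Hb$ are each linearly independent (otherwise one replaces $S_{1}$, $S_{2}$ by the true ranks).

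First I would establish \eqref{eq62}. By definition, $\sum_{\w\subseteq\N}\Lambda_{\C}(\w,\emptyset)$ enumerates codewords $(\bx,\by)\in\C$ whose second block $\by$ is zero. Setting $\by=\zero$ in \eqref{eq:TwoRackCode} forces $\Ha\bx=\zero$ and $\Hb\bx=\zero$ simultaneously. Stacking these into a single system with coefficient matrix of size $(S_{1}+S_{2})\times N$ and full row rank $S_{1}+S_{2}$, Lemma \ref{lemm:SolNum} yields exactly $q^{\,N-S_{1}-S_{2}}$ solutions. Taking $\log_{q}$ gives \eqref{eq62}.

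Next, for \eqref{eq63}, I would compute $|\C|=\sum_{\w,\s\subseteq\N}\Lambda_{\C}(\w,\s)$ by partitioning $\C$ according to the common value $\bc\triangleq\Hb\bx=\Hb\by$. The set of admissible $\bc$ is the row space of $\Hb$, of size $q^{S_{2}}$. For each such $\bc$, the number of $\bx$ obeying $\Ha\bx=\zero$ and $\Hb\bx=\bc$ is again $q^{\,N-S_{1}-S_{2}}$ by Lemma \ref{lemm:SolNum}, and independently the same count applies to $\by$. Multiplying gives
\begin{equation*}
|\C|=q^{S_{2}}\cdot q^{\,2(N-S_{1}-S_{2})}=q^{\,2N-2S_{1}-S_{2}}.
\end{equation*}
Combining this with \eqref{eq62} produces $\log_{q}|\C|-2(N-S_{1}-S_{2})=S_{2}$, which is precisely \eqref{eq63}.

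Finally, to obtain \eqref{eq:CodeRateCodeSize}, I would substitute the two identities into the equivalent form \eqref{eq25} of the rate expression from Theorem \ref{thm:CodeRate}, namely
\begin{equation*}
R(\Lambda_{\C})=\frac{N-S_{1}-S_{2}}{N}+\frac{S_{2}(M-L)}{MN},
\end{equation*}
replacing $N-S_{1}-S_{2}$ by $\log_{q}\sum_{\w}\Lambda_{\C}(\w,\emptyset)$ in the first term and $S_{2}$ by the right-hand side of \eqref{eq63} in the second term. There is no real obstacle here; the only subtlety to flag is the independence assumption on the rows of $\Ha$ and $\Hb$, which is exactly the same standing assumption used in Theorem \ref{thm:CodeRate} (otherwise both the parameters $S_{1},S_{2}$ on one side and the enumerator sums on the other should be interpreted through the actual row ranks, and the identities still hold).
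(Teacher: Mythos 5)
Your proof is correct and follows essentially the same route as the paper: identify $\sum_{\w}\Lambda_{\C}(\w,\emptyset)$ with the kernel of the stacked $(S_1+S_2)\times N$ system, compute $|\C|=q^{2N-2S_1-S_2}$, and substitute both counts into the rate expression \eqref{eq25}. The one place where you diverge slightly is the count of $|\C|$: the paper simply observes that there are $2S_1+S_2$ parity-check rows acting on a length-$2N$ codeword and applies rank--nullity once, whereas you slice $\C$ by the common syndrome $\bc=\Hb\bx=\Hb\by$ and multiply; both give the same total, and your slicing has the small virtue of making the role of $S_2$ more transparent. One terminological slip to fix: the set of admissible syndromes $\bc$ is a subset of $\FF_q^{S_2}$, namely the image of $\ker\Ha$ under the map $\bx\mapsto\Hb\bx$, not the \emph{row space} of $\Hb$ (which is a subspace of $\FF_q^{N}$); under the joint full-row-rank hypothesis on the stacked matrix this image is all of $\FF_q^{S_2}$, so the cardinality $q^{S_2}$ you invoke is nonetheless correct.
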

%%%%%%%%%%%%%%%%%%%%%%
\begin{proof}
First, it is clear that 
$
\sum_{\w\subseteq\N}\Lambda_{\C}(\w,\emptyset) 
$
is equal to the size of the following set
\[
|\{
(\bx, {\bf 0}) : \Ha \bx =  {\bf 0}, \: \Hb \bx = {\bf 0}
\}
|
\]
As the dimensions of $\bx, \Ha$ and $\Hb$ are respectively 
$N \times 1$, $S_{1} \times N$ and $S_{2} \times N$, the size of the set is 
obviously $ q^{N-S_{1}-S_{2}}$ leading to \eqref{eq62}

Since $\Ha$ and $\Hb$ are respectively $S_1 \times N$ and $S_2 \times N$ parity matrices of the code $\C$ in \eqref{eq:TwoRackCode}, $i$) the total number of parity equations is $2S_1+S_2$, and $ii$)  the length of codeword $(\bx,\by)$ is $2N$. Hence, 
the total number  of  codewords satisfying \eqref{eq:TwoRackCode}   is
\[
\sum_{\w,\s\subseteq\N}\Lambda_{\C}(\w,\s)=q^{2N-2S_1-S_2}.
\] 
Together with \eqref{eq62}, we have   \eqref{eq63} and \eqref{eq:CodeRateCodeSize}.
\end{proof}

%%%%%%%%%%%%%%%%
\def\Oa{{O_{1}}}

\def\TOT{{T}} 
%%%%%%%%%%%%%%%%%%

\begin{remark}
According to \eqref{eq:CodeRateCodeSize}, the rate of the storage code  is clearly nonlinear, with respect to the support enumerator $\Lambda_{\C}$. However, if we fix 
$$
\sum_{\w\subseteq\N}\Lambda_{\C}(\w,\emptyset) = \Oa,
$$
then maximising $R(\Lambda_{\C})$ is equivalent to maximising 
$\sum_{\w,\s\subseteq\N}\Lambda_{\C}(\w,\s)$.
%Hence, the code design problem is reduced to maximizing 
\end{remark}

\begin{theorem}[Upper bound]\label{thmLPbd}
Consider fixed $N$, $M$ and $L$ and   a $(\delta_1, r_1, \Gamma_{1},  \delta_2, r_2, \Gamma_{2}, a)$ multi-rack storage code $\C$ such that 
$$
\sum_{\w\subseteq\N}\Lambda_{\C}(\w,\emptyset) = \Oa.
$$
Let $O^{*}$ be the maximum of the following 
linear programming  problem.

\medskip
%\newpage
\noindent
\underline{Linear Programming Problem (LP1)}
\begin{align}
\displaystyle
&\textbf{Maximize}  \quad \sum_{\w,\s \subseteq \N} A_{\w,\s}  \nonumber\\
&\textbf{subject to}  \nonumber\\
% P1
& \quad  
A_{\w,\s} \geq 0,  \quad \forall \w,\s \subseteq \N \tag{C1}\\
& \quad
A_{\w,\s}  = A_{\s,\w}  \tag{C2}\\
% P2
&\quad C_{\w,\s}  = \sum_{\w' , \s' \subseteq \N} A_{\w',\s'}   \prod_{j\in \N}  \kappa_q (w'_j ,w_j)\kappa_q (s'_j,s_j), 
%\nonumber \\
% & \hspace{5.8cm} 
 \quad \forall \w,\s \subseteq \N  \tag{C3}\label{eq:C3} \\ 
% P3
& \quad C_{\w, \s}  \geq 0,   \quad \forall \w,\s \subseteq \N \tag{C4} \label{eq:C4} \\ 
% P4
% P5
& \quad A_{\emptyset,\emptyset} = 1 &  \tag{C5}\label{eq:C5} \\ 
& \quad
A_{\w,\s} = 0,
\quad \forall 1 \le |\w| \le \delta_1 \tag{C6} \label{eq:C6}\\ 
& \quad
A_{\w,\emptyset}  = 0,
\quad \forall 1 \le |\w| \le \delta_2 \tag{C7}\\ \displaybreak[3]
& \quad
\sum_{\w \in \Theta_{1}(i,\gamma,r_1)} 
C_{\w,\emptyset} \geq (q-1)  \sum_{\w,\s} A_{\w,\s},
\quad \forall (i,\gamma) \in \Phi(\Gamma_{1}) \tag{C8} \\ 
& \quad
\sum_{ (\w,\s) \in  \Theta_{2}(i,\gamma, r_{2} , a) } 
C_{\w,\s} \geq (q-1)  \sum_{\w,\s} A_{\w,\s}, 
%\nonumber \\
% & \hspace{5.2cm} 
\quad \forall (i,\gamma) \in \Phi(\Gamma_{2}) \tag{C9} \label{eq:C9}\\
& \quad
 \sum_{\w \subseteq \N}A_{\w,\emptyset} = \Oa \tag{C10}%\\
%%
%& \quad
%\TOT = \sum_{\w,\s} A_{\w,\s} \tag{C11}
\end{align} 
Then $R(\Lambda_{\C})$  is upper bound by  
\[
\frac{\log_{q} \Oa}{N} + 
\frac{M-L}{MN} 
\left( 
\log_{q} O^{*}  - 2 \log_{q}\Oa 
\right).
\]
\end{theorem}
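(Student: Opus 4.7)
The plan is to show that every multi-rack storage code $\C$ meeting the hypotheses yields a feasible point of LP1 whose objective value is exactly $|\C| = \sum_{\w,\s} \Lambda_\C(\w,\s)$. Granting this, $O^* \geq |\C|$, and substituting into the rate formula \eqref{eq:CodeRateCodeSize} from the preceding lemma converts the bound on $|\C|$ into the stated bound on $R(\Lambda_\C)$.

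To exhibit the feasible point, I would set $A_{\w,\s} := \Lambda_\C(\w,\s)$ and $C_{\w,\s} := |\C| \cdot \Lambda_{\Cd}(\w,\s)$ for all $\w,\s \subseteq \N$. With this identification, the nonnegativity constraints (C1) and (C4) are immediate since support enumerators are nonnegative and $|\C| > 0$; (C5) holds because the zero codeword is the only codeword of empty support; and (C10) is the standing assumption on $O_1$. The remaining constraints translate directly from Theorem \ref{thm5}: (C2) is the symmetry in part \ref{item:RackModelCodeProperty2}; (C3) is obtained by multiplying both sides of the MacWilliams-type identity in part \ref{item:RackModelCodeProperty1} by $|\C|$, which absorbs the $1/|\C|$ factor into $C_{\w,\s}$; (C6) and (C7) come from the resilience conditions in parts \ref{item:RackModelCodeProperty3} and \ref{item:RackModelCodeProperty4}; and (C8), (C9) come from multiplying the locality bounds in parts \ref{item:RackModelCodeProperty5} and \ref{item:RackModelCodeProperty6} by $|\C| = \sum_{\w,\s} A_{\w,\s}$, which matches the right-hand sides appearing in the LP. Feasibility then gives $O^* \geq |\C|$. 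Since the coefficient $(M-L)/(MN)$ is nonnegative (as $L \leq M$) and $\log_q$ is monotone, replacing $|\C|$ by $O^*$ in \eqref{eq:CodeRateCodeSize} yields the claimed upper bound on $R(\Lambda_\C)$.

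The main subtlety is the rescaling $C_{\w,\s} := |\C| \cdot \Lambda_{\Cd}(\w,\s)$, which must be carried out consistently so that the MacWilliams identity (C3) and the two locality inequalities (C8)--(C9) all hold simultaneously under the same normalization; otherwise one of the constraints would either be infeasible or only hold after an extra factor. Once this bookkeeping is settled, the remainder of the proof is purely a translation of Theorem \ref{thm5} into an LP feasibility statement, followed by an application of the rate identity and monotonicity of the logarithm.
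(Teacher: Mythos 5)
Your proposal is correct and takes essentially the same approach as the paper: define $A_{\w,\s} \triangleq \Lambda_\C(\w,\s)$ and $C_{\w,\s} \triangleq |\C|\,\Lambda_{\Cd}(\w,\s)$, verify feasibility constraint by constraint via Theorem~\ref{thm5}, and substitute $O^* \geq |\C|$ into the rate identity~\eqref{eq:CodeRateCodeSize}. You actually spell out a few steps the paper leaves implicit, in particular the identity $|\C| = \sum_{\w,\s} A_{\w,\s}$ that makes the right-hand sides of (C8) and (C9) match, and the monotonicity argument that converts the bound on $|\C|$ into a bound on $R(\Lambda_\C)$.
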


\begin{proof}[Proof of Theorem \ref{thmLPbd}]
We define
\begin{align*}
A_{\w,\s} & \triangleq \Lambda_\C(\w,\s) \\
C_{\w,\s} & \triangleq |\C| \Lambda_{\Cd}(\w,\s).
\end{align*}
As mentioned earlier, maximizing the code rate is equivalent to maximize the code size which is the objective function of the linear programming problem LP1. The first constraint of the optimization problem LP1 follows from the fact that the number of codewords are non-negative. The second constraint follows from the symmetry property of the code.  The constraint \eqref{eq:C3} follows from the dual code support enumerator property (MacWilliam's identity) in Theorem \ref{thm5}. Constraint \eqref{eq:C4} follows from the fact that the number of dual codewords are non-negative. Constriant \eqref{eq:C5} follows from the fact that there exists only one zero codeword in code $\C$. Constraints \eqref{eq:C6}--\eqref{eq:C9} follow from the properties \ref{item:RackModelCodeProperty3}--\ref{item:RackModelCodeProperty6} in Theorem \ref{thm5}.
\end{proof}

\begin{remark}
Strictly speaking, to optimise $R(\Lambda_{\C})$, one also needs to optimise the choice of $O_{1}$, which is generally unknown. However, as $O_{1} \in  \{q^{i}, \: i=0, \ldots, N \}$. Hence, 
%one can compute the LP bounds for each possible values of $O_{1}$. Then we can in turn compute an upper bound for $R(\Lambda_{\C})$. Specifically,   
we have 
\[
R(\Lambda_{\C}) \le \max_{ i=0, \ldots, N}
\frac{i}{N} + 
\frac{M-L}{MN} 
\left( 
\log_{q} O^{*}(i)  - 2 i 
\right)
\]
where $O^{*}(i)$ is   the maximum of (LP1) when $\Oa = q^{i}$.
\end{remark}

%%%%%%%%%%%%%%%%%%%%%%%%%%%%
\subsection{Bound Simplification via Symmetry}
%%%%%%%%%%%%%%%%%%%%%%%%%%%%%%%%

The complexity (in terms of the number of variables and the number of constraints) of the linear programming problem LP1 in Theorem \ref{thmLPbd} will increase exponentially with the number of storage nodes $N$ in each rack.  
However, if we notice the LP1 carefully, we can observe that the problem itself has much symmetries in it such that exploiting this inherent symmetry can significantly reduce the problem complexity.  

Let $S_\N$  be the symmetric group on $\N$, whose elements are all the permutations of the elements in $\N$ which are treated as bijective functions from the set of symbols to itself. Clearly, $|S_\N| = N!$.
%
%Now, let us examine the linear programming problem in Theorem \ref{thmLPbd}, we will notice that there are many symmetries in the constraint and optimisation function. Specifically, 
The variables in the optimization problem LP1 are 
$$
(A_{\w,\s}, C_{\w,\s}, \: \w,\s\subseteq \N).
$$
Let $\sigma$ be a permutation on $\N$ such that $\sigma \in S_\N$. For each $\w \subseteq \N$, we extend the mapping $\sigma$ by defining 
$$
\sigma(\w) \triangleq \{  \sigma(i) : \: i \in \w \}.
$$
Due to the symmetries, we have the following proposition. 
 
 \def\qqq{}
\begin{proposition} \label{prop1}
Suppose $(\qqq a_{\w,\s}, c_{\w,\s} :\: \w,\s \subseteq \N)$ satisfies all the constraints in the linear programming problem LP1 in Theorem \ref{thmLPbd}. For any $\sigma \in S_\N$,  let
\begin{align}
a_{\w,\s}^{(\sigma)} & = a_{\sigma(\w),\sigma(\s)} \\
c_{\w,\s}^{(\sigma)} & = c_{\sigma(\w),\sigma(\s)}. 
\end{align}
Then 
$(\qqq a_{\w,\s}^{(\sigma)}, c_{\w,\s}^{(\sigma)} :\: \w,\s \subseteq \N)$ also satisfies the constraints in LP1, with the same values in the objective function. In other words,
\[
\sum_{\w,\s\subseteq \N}a_{\w,\s}^{(\sigma)} = \sum_{\w,\s\subseteq \N}a_{\w,\s}.
\]
\end{proposition}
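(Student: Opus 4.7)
The plan is to verify, constraint by constraint, that the permuted tuple $(a^{(\sigma)}_{\w,\s}, c^{(\sigma)}_{\w,\s})$ remains feasible for LP1 and yields the same objective value. The unifying observation is that $\sigma$ is a bijection on $\N$ that lifts to a bijection on $2^\N$ preserving cardinality, membership, complement, disjointness, and the empty set; moreover the map $(\w,\s)\mapsto(\sigma(\w),\sigma(\s))$ is a bijection on the index set of the variables, so any sum indexed over all $(\w,\s)$ is invariant under the relabelling.

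For the ``easy'' constraints I would argue as follows. (C1) and (C4) are preserved because the permuted variables are just relabellings of nonnegative quantities. (C2) follows since $a^{(\sigma)}_{\w,\s}=a_{\sigma(\w),\sigma(\s)}=a_{\sigma(\s),\sigma(\w)}=a^{(\sigma)}_{\s,\w}$. (C5) uses $\sigma(\emptyset)=\emptyset$. For (C6) and (C7) the defining conditions depend only on $|\w|$, which $\sigma$ preserves. Finally, the objective $\sum_{\w,\s}a^{(\sigma)}_{\w,\s}$ and (C10) reduce to $\sum_{\w,\s}a_{\w,\s}$ and $\sum_{\w}a_{\w,\emptyset}$ under the reindexing $(\w,\s)\mapsto(\sigma(\w),\sigma(\s))$.

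The main obstacle is constraint (C3), the MacWilliams-type identity, since it involves a product $\prod_{j\in\N}\kappa_q(w'_j,w_j)\kappa_q(s'_j,s_j)$ over coordinates, and so the permutation must be shown to commute with this product. Viewing $\w$ as a $0$--$1$ indicator vector, $\sigma(\w)$ has $j$-th coordinate $w_{\sigma^{-1}(j)}$. Substituting $\w'=\sigma(\w'')$ and $\s'=\sigma(\s'')$ in the defining sum for $c_{\sigma(\w),\sigma(\s)}$ gives $\sum_{\w'',\s''}a^{(\sigma)}_{\w'',\s''}\prod_{j\in\N}\kappa_q(w''_{\sigma^{-1}(j)},w_{\sigma^{-1}(j)})\kappa_q(s''_{\sigma^{-1}(j)},s_{\sigma^{-1}(j)})$, and then reindexing the product by $j'=\sigma^{-1}(j)$ turns it into $\prod_{j'\in\N}\kappa_q(w''_{j'},w_{j'})\kappa_q(s''_{j'},s_{j'})$, which is exactly (C3) for the permuted variables.

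For the locality constraints (C8) and (C9), the key auxiliary step is checking that the locality index sets commute with $\sigma$: namely $\w\in\Theta_1(i,\gamma,r_1)$ iff $\sigma(\w)\in\Theta_1(\sigma(i),\sigma(\gamma),r_1)$, and analogously for $\Theta_2$, together with $(i,\gamma)\in\Phi(\Gamma_1)$ iff $(\sigma(i),\sigma(\gamma))\in\Phi(\Gamma_1)$ and similarly for $\Omega(\Gamma_2)$. Given these identities, $\sum_{\w\in\Theta_1(i,\gamma,r_1)}c^{(\sigma)}_{\w,\emptyset}$ reindexes to $\sum_{\w'\in\Theta_1(\sigma(i),\sigma(\gamma),r_1)}c_{\w',\emptyset}$, which by the assumed feasibility at $(\sigma(i),\sigma(\gamma))$ is at least $(q-1)\sum_{\w',\s'}a_{\w',\s'}=(q-1)\sum_{\w,\s}a^{(\sigma)}_{\w,\s}$; the same argument dispatches (C9). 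Since every constraint and the objective value are preserved, the proposition follows.
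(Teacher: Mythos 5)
Your proof is correct and follows the same underlying idea as the paper, which disposes of the proposition in a single sentence ("follows directly from the symmetry in the constraint and optimising function"); you have simply filled in the constraint-by-constraint verification that the paper leaves implicit. In particular, your handling of the MacWilliams-type constraint (C3) via the indicator-vector identity $[\sigma(\w)]_j = w_{\sigma^{-1}(j)}$ and the reindexing of the product, and your check that $\Theta_1$, $\Theta_2$, $\Phi$, $\Omega$ commute with $\sigma$, are exactly the details the paper's terse argument relies upon.
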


\begin{proof}
The proposition  follows directly from the symmetry in the constraint and optimising function.
\end{proof}

As the feasible region in the linear programming problem (LP1) is convex, we have the following corollary.

\begin{corollary}\label{corollary1}
Let 
\begin{align}  
a^*_{\w,\s} & = \frac{1}{|S_\N|} \sum_{\sigma \in S_\N} a_{\w,\s}^{(\sigma)} \label{eq:AvgSol} \\
c^*_{\w,\s} & = \frac{1}{|S_\N|} \sum_{\sigma \in S_\N} c_{\w,\s}^{(\sigma)} 
\end{align}
Then 
$(\qqq a_{\w,\s}^{*}, c_{\w,\s}^{*} :\: \w,\s \subseteq \N)$ also satisfies the constraints in (LP1) and  
\begin{align} \label{eq:SumAvg}
\sum_{\w,\s \subseteq \N} a^*_{\w,\s} = \sum_{\w,\s \subseteq \N} a_{\w,\s}.
\end{align}
\end{corollary}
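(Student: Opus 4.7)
The plan is to derive Corollary \ref{corollary1} as an immediate consequence of Proposition \ref{prop1} together with the convexity of the feasible region of (LP1). Since every constraint of (LP1) is either a linear equality or a linear inequality in the variables $(a_{\w,\s}, c_{\w,\s})$, the feasible set is an intersection of half-spaces and hyperplanes and is therefore convex. Consequently, any convex combination of feasible points is feasible.

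First I would observe that by Proposition \ref{prop1}, for every permutation $\sigma \in S_\N$ the tuple $(a_{\w,\s}^{(\sigma)}, c_{\w,\s}^{(\sigma)} : \w,\s\subseteq \N)$ is a feasible solution of (LP1). The averaged quantities in \eqref{eq:AvgSol} are then the uniform convex combination
\[
(a^*_{\w,\s}, c^*_{\w,\s}) = \frac{1}{|S_\N|} \sum_{\sigma \in S_\N} (a_{\w,\s}^{(\sigma)}, c_{\w,\s}^{(\sigma)}),
\]
so feasibility of $(a^*_{\w,\s}, c^*_{\w,\s})$ follows directly from convexity. Concretely, one can verify constraint-by-constraint: non-negativity (C1), (C4) is preserved under averaging; the symmetry (C2), the MacWilliams identity (C3) and the normalization (C5) are linear equalities that are preserved because each summand satisfies them; constraints (C6), (C7) are homogeneous equalities, again preserved; and the locality inequalities (C8), (C9) are preserved because both sides average linearly.

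For the objective value in \eqref{eq:SumAvg}, I would interchange the finite sums:
\[
\sum_{\w,\s\subseteq \N} a^*_{\w,\s}
= \frac{1}{|S_\N|}\sum_{\sigma \in S_\N} \sum_{\w,\s\subseteq \N} a_{\w,\s}^{(\sigma)}
= \frac{1}{|S_\N|}\sum_{\sigma \in S_\N} \sum_{\w,\s\subseteq \N} a_{\w,\s}
= \sum_{\w,\s\subseteq \N} a_{\w,\s},
\]
where the middle equality uses the fact, established in Proposition \ref{prop1}, that each permuted solution yields the same objective value as the original, which in turn follows from the change of variables $(\w,\s)\mapsto (\sigma^{-1}(\w),\sigma^{-1}(\s))$ being a bijection on $2^\N \times 2^\N$.

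There is really no main obstacle here: the entire content of the corollary is the standard ``symmetrize-a-feasible-solution'' trick in linear programming, and all the work has already been done in Proposition \ref{prop1}. The only points to be careful about are (i) writing out that each individual constraint of (LP1) is indeed linear (or at worst a linear inequality) so that averaging preserves it, and (ii) making the bijection argument explicit when handling the objective. With these two verifications the corollary follows in a few lines.
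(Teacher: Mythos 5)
Your proposal is correct and takes essentially the same route as the paper: invoke Proposition \ref{prop1} to get $|S_\N|$ permuted feasible solutions, observe that $(a^*_{\w,\s}, c^*_{\w,\s})$ is their uniform convex combination and hence feasible by convexity of the LP feasible region, and then exchange the finite sums together with the objective-value equality from Proposition \ref{prop1} to obtain \eqref{eq:SumAvg}. The extra constraint-by-constraint verification you sketch is fine but not needed beyond what the paper records.
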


\begin{proof}
From Proposition \ref{prop1}, for any feasible solution $(a_{\w,\s}, c_{\w,\s} : \: \w,\s \subseteq \N)$, there exist $|S_\N|$ other feasible solution $(a^{(\sigma)}_{\w,\s}, c^{(\sigma)}_{\w,\s} : \: \w,\s \subseteq \N, \sigma \in S_\N)$. Since $(a^*_{\w,\s},c^*_{\w,\s})$ is the convex linear combination of all these  feasible solutions,  it also satisfies the constraints of LP1. From \eqref{eq:AvgSol}
\[
\sum_{\w,\s \subseteq \N} a^*_{\w,\s} = \frac{1}{|S_\N|} \sum_{\w,\s \subseteq \N} \sum_{\sigma \in S_\N} a_{\w,\s}^{(\sigma)}.
\]
Moreover, from Proposition \ref{prop1}
\[
\sum_{\w,\s \subseteq \N} \sum_{\sigma \in S_\N} a_{\w,\s}^{(\sigma)} = |S_\N| \sum_{\w,\s \subseteq \N} a_{\w,\s}.
\]
The corollary then follows.
\end{proof}

By Corollary \ref{corollary1},  it is sufficient to consider only "symmetric" feasible solutions of the form 
$
(\qqq a^*_{\w,\s}, c^*_{\w,\s} : \w,\s \subseteq \N).
$
in LP1. In other words,  one can impose additional symmetric constraint to LP1 without affecting the value of the objective function.

One important benefit for considering $
(\qqq a^*_{\w,\s}, c^*_{\w,\s} : \w,\s \subseteq \N).
$ is that many terms in the linear programming bound can become alike (and hence can be grouped together).

%-------------------- Prop
\begin{proposition}[Grouping alike terms] \label{prop2}
Suppose $\w,\s,\w',\s' \subseteq \N$ such that
\begin{align}
|\w \setminus \s | & = |\w' \setminus \s' | \label{eq35} \\
|\w \cap \s | & = |\w' \cap \s'  | \\
|\s \setminus \w | & = |\s' \setminus \w' | \label{eq37}.
\end{align}
Then $a^*_{\w,\s}  = a^*_{\w', \s'}$ and $c^*_{\w,\s} = c^*_{\w', \s'} $.
\end{proposition}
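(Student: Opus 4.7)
The plan is to reduce the proposition to the $S_\N$-invariance of the symmetrised variables $a^*_{\w,\s}$ and $c^*_{\w,\s}$ by exhibiting a single permutation $\sigma \in S_\N$ that carries $(\w,\s)$ to $(\w',\s')$. The key observation is that an ordered pair of subsets $(\w,\s)$ of $\N$ partitions $\N$ into four (possibly empty) blocks
\[
\w \setminus \s,\quad \w \cap \s,\quad \s \setminus \w,\quad \N \setminus (\w \cup \s),
\]
and the pair $(\w,\s)$ is determined up to a relabelling of $\N$ precisely by the sizes of these blocks.

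First I would use hypotheses \eqref{eq35}--\eqref{eq37} to conclude that the first three of these blocks have the same sizes for $(\w,\s)$ as for $(\w',\s')$. Since all four blocks partition the common ground set $\N$, the cardinalities of the fourth blocks must also coincide. Therefore I can construct a permutation $\sigma \in S_\N$ that maps $\w \setminus \s$ bijectively onto $\w' \setminus \s'$, $\w \cap \s$ onto $\w' \cap \s'$, $\s \setminus \w$ onto $\s' \setminus \w'$, and $\N \setminus (\w \cup \s)$ onto $\N \setminus (\w' \cup \s')$. Any such $\sigma$ satisfies $\sigma(\w) = \w'$ and $\sigma(\s) = \s'$.

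Next I would verify that the symmetrised solution is $S_\N$-invariant. Starting from the definition \eqref{eq:AvgSol}, for any $\pi \in S_\N$,
\[
a^*_{\pi(\w),\pi(\s)} = \frac{1}{|S_\N|}\sum_{\sigma \in S_\N} a_{\sigma(\pi(\w)),\sigma(\pi(\s))} = \frac{1}{|S_\N|}\sum_{\tau \in S_\N} a_{\tau(\w),\tau(\s)} = a^*_{\w,\s},
\]
where the middle equality uses the change of index $\tau = \sigma \circ \pi$, which ranges over $S_\N$ as $\sigma$ does. The identical argument applies to $c^*$.

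Applying this invariance with the permutation $\sigma$ constructed in the first step gives $a^*_{\w,\s} = a^*_{\sigma(\w),\sigma(\s)} = a^*_{\w',\s'}$, and similarly $c^*_{\w,\s} = c^*_{\w',\s'}$, proving the proposition. There is no real obstacle here; the only point requiring care is the change-of-index step in the invariance calculation, which is immediate because left-multiplication by a fixed $\pi$ is a bijection on $S_\N$.
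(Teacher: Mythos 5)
Your proof is correct and follows essentially the same route as the paper: both arguments produce a permutation carrying $(\w,\s)$ to $(\w',\s')$ from the block-size equalities \eqref{eq35}--\eqref{eq37}, and then re-index the average over $S_\N$ using the group structure. The only differences are cosmetic — you spell out the four-block partition that guarantees the permutation exists, and you phrase the re-indexing as a standalone invariance claim (note that $\tau = \sigma\circ\pi$ is right-multiplication by $\pi$, not left, a harmless slip).
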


The proof of Proposition \ref{prop2} is given in Appendix \ref{append:A5}.

Due to  Proposition \ref{prop2},  we can impose the following additional constraint on (LP1) 
\begin{align}
A_{\w,\s} &= A_{\w',\s'} \\
C_{\w,\s} &= C_{\w',\s'} 
\end{align}
for all $\w, \s$ satisfying \eqref{eq35} - \eqref{eq37}. 
As many of these variables are now the same, one can greatly reduce the number of variables in (LP1). 

\begin{theorem}[Simplified LP Bound]\label{thm4}
The maximum in (LP1) is the same as the maximum of the following linear programming problem:

%\medskip

\noindent
\underline{Reduced Linear Programming Problem (LP2)} \label{LP2}
\begin{align}
\displaystyle
&\textbf{maximize}  \quad \sum_{d,e,f }  \binom{N}{d,e,f} X_{d,e,f } \nonumber\\
&\textbf{subject to}  \nonumber\\
% P1
& \quad  
X_{d,e,f} \geq 0,  \hspace{3.8cm} \quad \forall d,e,f   \tag{D1}\\
& \quad
X_{d,e,f}  = X_{f,e,d},  
\hspace{2.85cm} \quad \forall d,e,f  \tag{D2}\\   
% P2
%&\quad C_{\w,\s}  = \sum_{\w' , \s' \subseteq \N} A_{\w',\s'}   \prod_{j\in \N}  \kappa_q (w'_j ,w_j)\kappa_q (s'_j,s_j), 
& \quad
Y_{d,e,f} = \sum_{d',e',f'} \Delta_{1}(d,e,f,d',e',f') X_{d',e',f'}, \quad  
%\nonumber \\
% & \hspace{6.3cm} 
 \forall d,e,f
\tag{D3}\\ 
% P3
& \quad Y_{d,e,f}  \geq 0, \hspace{3.75cm}  \quad \forall d,e,f \tag{D4} \\ \displaybreak[3]
% P4
% P5
& \quad X_{0,0,0} = 1 &  \tag{D5}\\ 
& \quad
X_{d,e,f} = 0,
\hspace{1.45cm} \quad \forall 1 \le d + e \le \delta_1 \tag{D6}\\ 
& \quad
X_{d,0,0}  = 0,
\hspace{2.8cm} \quad \forall 1 \le d\le \delta_2 \tag{D7}\\
& \quad
\sum_{d=2}^{r_{1}+1} \Delta_{2}(d) Y_{d,0,0} \ge (q-1) \sum_{d,e,f} \binom{N}{d,e,f} X_{d,e,f},
%\quad \forall (i,\gamma) \in \Omega(\Gamma_{1}) 
\tag{D8} \\
& \quad
\sum_{d+e\leq r_2 +1,e+f\leq a} \Delta_{3}(d,e,f) Y_{d,e,f} \ge (q-1) \sum_{d,e,f}  \binom{N}{d,e,f} X_{d,e,f}, \tag{D9}\\
& \quad
 \sum_{d} X_{d,0,0}  = \Oa 
 \tag{D10}
%
%& \quad
%\TOT = \sum_{d,e,f} X_{d,e,f}. \tag{D11}
\end{align} 
where $ \Delta_{1}(d,e,f,d',e',f') $, $ \Delta_{2}(d)$ and $\Delta_{3}(d,e,f) $ are respectively defined as in  \eqref{eqDelta1}, \eqref{eqDelta2} and \eqref{eqDelta3}.
Here,  $ (d,e,f)$  are tuples such that $d,e,f$ are nonnegative integers with  a total sum no more than $N$.
\end{theorem}

The proof of Theorem \ref{thm4} is given in Appendix \ref{append:B}.

\begin{remark}
In (LP1), the  number of variables and constraints grows exponentially with $N$ -- the number of nodes in a rack. Such exponential growth makes  (LP1) practically infeasible to solve for moderate $N$. However, via reduction by symmetry,  the number of variables in (LP2) has greatly reduced to  
$
2\binom{N+3}{3} + 1
$
while  the number of constraints  to  
$
\frac{7}{2} \binom{N+3}{3} + 5 + \delta_1 + \delta_2.
$
Clearly, the reduction is significant. 
\end{remark}

%%%%%%%%%%%%%%%%%%%%%%%%%%%%%%%%%%%%%%%%%%
\begin{example}
We consider a very simple setup to show how some of the constraints in linear programming bound in (LP2) are calculated. Let, $N=8$ is the number of the storage nodes in each rack. As mentioned earlier in Section \ref{subsec:IVA} parity matrix $\solG$ can be design separately, hence the number of the racks $M$ will not affect the LP bound. Moreover, let $\Gamma_1=2$, $\delta_1=3$, $r_1=3$, $\Gamma_2=4$, $\delta_2=6$, $r_2=1$, and $a=3$. Then, we have $(0 \leq d+e+f \leq 8, 0 \leq d,e,f \leq 8)$ and $(0 \leq d^{\prime}+e^{\prime}+f^{\prime} \leq 8, 0 \leq d^{\prime},e^{\prime},f^{\prime} \leq 8)$. Here, we show how to calculate $\Delta_{1}(d,e,f,d',e',f')$, $\Delta_{2}(d)$, and $\Delta_{3}(d,e,f)$. Following \eqref{eqDelta2} and \eqref{eqDelta3}, respectively,
\begin{gather*}
\Delta_{2}(d) = \binom{5}{d-1}, \\
\Delta_{3}(d,e,f) = \binom{3}{e-1}\binom{3-e}{d}\binom{8-e-d}{f}+\binom{3}{e}\binom{3-e}{d-1}\binom{8-e-d}{f}.
\end{gather*}
In order to calculate $\Delta_{1}(d,e,f,d',e',f')$ from \eqref{eqDelta1}, we need to find all possible values of $(\zeta_{i,j},  1 \leq i,j \leq 4)$  for each fixed tuples of $(d,e,f)$ and all tuples of $(d',e',f')$ such that $d=\sum_j \zeta_{1,j}$, $e=\sum_j \zeta_{2,j}$, $f=\sum_j \zeta_{3,j}$, $d'=\sum_j \zeta_{j,1}$, $e'=\sum_j \zeta_{j,2}$, $f'=\sum_j \zeta_{j,3}$. Then $U(\zeta)$, $\sigma_1(\zeta)$, and $\sigma_2(\zeta)$ can be calculated from \eqref{eq:sigma1zeta}, \eqref{eq:sigma2zeta}, and \eqref{eq:Uzeta}, respectively. For instance, let $(d,e,f)=(5,2,1)$. Then for a tuple $(d',e',f')=(3,1,0)$, one possible combinations for $\zeta_{i,j}$ will be $(\zeta_{1,1}=2, \zeta_{1,2}=0, \zeta_{1,3}=0, \zeta_{1,4}=3, \zeta_{2,1}=1, \zeta_{2,2}=0, \zeta_{2,3}=0, \zeta_{2,4}=1, \zeta_{3,1}=0, \zeta_{3,2}=1, \zeta_{3,3}=0, \zeta_{3,4}=0, \zeta_{4,1}=0, \zeta_{4,2}=0, \zeta_{4,3}=0, \zeta_{4,4}=0)$. Therefore, 
\begin{gather*}
\sigma_1(\zeta)=6, ~~ \sigma_2(\zeta) = 4,\\
U(\zeta) = \binom{5}{2,0,0,3}\binom{2}{1,0,0,1}\binom{1}{0,1,0,0}\binom{0}{0,0,0,0}=20.
\end{gather*}
Another possible combination will be $(\zeta_{1,1}=1, \zeta_{1,2}=0, \zeta_{1,3}=0, \zeta_{1,4}=4, \zeta_{2,1}=1, \zeta_{2,2}=1, \zeta_{2,3}=0, \zeta_{2,4}=0, \zeta_{3,1}=1, \zeta_{3,2}=0, \zeta_{3,3}=0, \zeta_{3,4}=0, \zeta_{4,1}=0, \zeta_{4,2}=0, \zeta_{4,3}=0, \zeta_{4,4}=0)$. Therefore, 
\begin{gather*}
\sigma_1(\zeta) = 6, ~~ \sigma_2(\zeta) = 4,\\
U(\zeta) = \binom{5}{1,0,0,4}\binom{2}{1,1,0,0}\binom{1}{1,0,0,0}\binom{0}{0,0,0,0}=10.
\end{gather*}
As mentioned earlier, all values of $\zeta_{i,j}$ must be calculated (this can be done by a computer simulation code e.g. in MATLAB) for all $(d,e,f)$ and $(d',e',f')$ in order to calculate $\Delta_1(d,e,f,d',e',f')$. Figure \ref{fig:ZetaTable} illustrates of selecting the different combinations for $\zeta_{i,j}s$. The numbers in red and green are the two aforementioned different possible combinations for $\zeta_{i,j}s$. The combinations will be chosen such that the sum of the rows and columns satisfy the values of $(d,e,f)$ and $(d',e',f')$.

\begin{figure}[t]
\centering
\includegraphics[width=.5\textwidth]{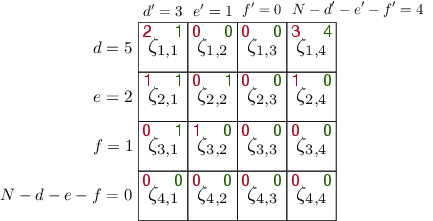}
\caption{An illustration on selecting different combinations for $\zeta_{i,j}$s.}
\label{fig:ZetaTable}
\end{figure} 

\end{example}

%%%%%%%%%%%%%%%%%%%%%%%%%%%%%%%%%%%%%%%%%%

\section{Conclusion} \label{Sec:Conc}

In this paper we introduced a code-design framework for multi-rack distributed data storage networks. In this model the encoded data is stored in storage nodes distributed over multiple racks. Each rack has a process unit which is responsible for all calculations and transmissions inside the rack. Practically, the cost of data transmission within a rack is much less than the data transmission across the racks. Therefore, it will be a significant reduction in the repair bandwidth if the node failures are repaired locally inside the racks. Our multi-rack distributed storage code is defined with three parity check matrices $\Ha$, $\Hb$, and $\solG$. We proposed a code-design framework for multi-rack storage networks which is able to locally repair the node failures within the rack using the parity check matrix $\Ha$ in order to minimise the repair cost. However, under the severe failure circumstances where the failures are not repairable by only the survived nodes inside the rack, our coding scheme is capable of engaging the storage nodes in other racks in helping the surviving nodes inside the failed rack to repair the failures where the parity check matrices $\solG$ and $\Hb$ determine the helper racks and the helper nodes, respectively. We justify that the parity matrix $\solG$ can be designed separately from the matrices $\Ha$ and $\Hb$ where indeed it can be designed as the parity check matrix of a locally repairable code such that only a small group of racks need to participate in an inter-rack repair. We established the relation between the rate and the size of the multi-rack storage code and showed that maximising the rate of our multi-rack storage code is equivalent to maximising the code size. In order to maximise the code size, we established a linear programming problem based on the code-design framework criteria. These bounds characterise the trade-off between different code parameters such as inter- and intra-rack resilience and locality. We also exploited symmetries in the linear programming problem for simplifying and significant reductions in its complexity.

%%%%%%%%%%%%%%%%%%%%%%%%%%%%%%%%%%%%%%%%%%
\bibliographystyle{IEEEtran}
% argument is your BibTeX string definitions and bibliography database(s)
\bibliography{Ali_bib}
%%%%%%%%%%%%%%%%%%%%%%%%%%%%%%%%%%%%%%%%%%

%%%%%%%%%%%
\appendices
%%%%%%%%%%%

%%%%%%%%%%%%%%%%%%%%%%%%%%%%%%%%%%%%%%%%%%%%%%%%%%%%%%
%%%%%%%%%%%%%%%%%%%%%%%%%%%%%%%%%%%%%%%%%%%%%%%%%%%%%%
\section{Proof of Theorem \ref{thm:2}} \label{append:A2}

By Criterion 1), there exists a row vector ${\bf u}$ of length $S_{2}$ such that 
\begin{align} \label{eq:Kspan}
\r_{j} =  {\bf u} \Hb. 
\end{align}
In other words, $\r_j$ is a vector from row space of $\Hb$. From Lemma \ref{lem:lemma1_4}, as $\beta_{j} \in \Omega(\Ha, \r_{j} , j)$ by  Criterion 3, there exist row vectors ${\bf y}$,  ${\bf y}'$ and $a\in \FFq$ such that 
\begin{align} \label{eq:eVector}
{\bf e}_{j} = {\bf y} \Ha + {\bf y}' \erasurematrix^{\beta_{j}} + a  \r_{j}.
\end{align}
Now, notice
\begin{align}
X_{1,j} & = {\bf e}_{j} X_{1,*}^{\top} \\
& = ({\bf y}  \Ha + {\bf y}'  \erasurematrix^{\beta_{j}} + a  \r_{j}) X_{1,*}^{\top} \\
& =   {\bf y}'  \erasurematrix^{\beta_{j}}  X_{1,*}^{\top}   + a \r_{j} X_{1,*}^{\top} 
\end{align}
where the last equality follows from \eqref{eq:IRP}. Let 
\begin{align} \label{eq:cVector}
[c_{j,1} , \ldots, c_{j,N} ] = {\bf y}'  \erasurematrix^{\beta_{j}}.
\end{align}
Then  
\[
{\bf y}'  \erasurematrix^{\beta_{j}}  X_{1,*}^{\top}  = 
\sum_{n\in\beta_{j}} c_{j,n} X_{1,n}.
\]
Consequently, 
\begin{align}
X_{1,j} & = \sum_{n\in\beta_{j}} c_{j,n} X_{1,n} + a \r_{j} X_{1,*}^{\top}.
\end{align}

Let ${\bf f}_\ell  =[f_{1} , \ldots, f_{M}], ~ \ell \in \{1,\ldots,M\}$, be a length $M$ row vector such that 
\begin{align} \label{eq:fVector}
f_{i} = 
\begin{cases}
1 & \text{ if } i=\ell  \\
0 & \text{ otherwise.}
\end{cases}
\end{align}
Then
\[
X_{1,*} = {\bf f}_1 \X
\]
and hence
\begin{align}
a \r_{j} X_{1,*}^{\top} & = a  X_{1,*} \r_{j}^{\top}  \\
& = a  {\bf f}_1 \X \r_{j}^{\top}.  
\end{align}
From Lemma \ref{lem:lemma1_4}, since $\tau \in \Omega(\solG, 1)$, there exist vectors ${\bf z}$ and ${\bf z}'$ such that 
\begin{align} \label{eq:fVector2}
{\bf f}_1 = {\bf z}  \solG  + {\bf z}' \erasurematrix^{\tau} .
\end{align}
Now, notice that $\erasurematrix^{\tau}$ is a $M\times M$ matrix over $\FFq$, as $\solG$ has only $M$ columns. 
Consequently, 
\begin{align}
a \r_{j} X_{1,*}^{\top} & = a ( {\bf z}  \solG  + {\bf z}' \erasurematrix^{\tau}  )\X \r_{j}^{\top} \\
& =  a {\bf z}' \erasurematrix^{\tau} \X \r_{j}^{\top}  
\end{align}
where the last equality follows from that
\begin{align}
 {\bf z}  \solG\X \r_{j}^{\top} & = {\bf z}  \solG \X  \Hb^{\top}  {\bf u}^{\top}   \\
& =  {\bf z} \left(  \solG \X  \Hb^{\top}  \right) {\bf u}^{\top}  \\
& = 0,
\end{align}
where the last equality follows from \eqref{eq:ARP}.

Let the matrix 
\begin{align} \label{eq:dmatrix}
\left[
\begin{array}{ccc}
d_{j,1,1} & \cdots & d_{j,1,N} \\
\vdots & \ddots &\vdots \\
d_{j,M,1} & \cdots, & d_{j,M,N}  \\
\end{array}
\right] & = a({\bf z}'   \erasurematrix^{\tau}  )^{\top} \r_{j} \\
& = a \erasurematrix^{\tau}  ({\bf z}')^{\top}    \r_{j},
\end{align}
as $\erasurematrix^{\tau} $ is a diagonal matrix. If 
$\r_{j} = [r_{j,1} , \ldots, r_{j,N}] $ and 
${\bf z}'   \erasurematrix^{\tau}  \triangleq {\bf v} \triangleq  [v_{1}, \ldots, v_{M}  ]$, then it can be verified easily that 
\begin{enumerate}
\item $d_{j,m,n} = a v_{m} r_{n}$ and hence 
 $d_{j,m,n} = 0$ if either $m \not\in \tau$ or $n \not \in \mu_{j}$.

\item 
\begin{align}
a \r_{j} X_{1,*}^{\top} & = \sum_{m\in\tau} \left( 
                                        \sum_{n\in \mu_{j}} d_{j,m,n} X_{m,n}
  			 	    \right).
\end{align}
\end{enumerate}
Thus the theorem is then proved. 

%%%%%%%%%%%%%%%%%%%%%%%%%%%%%%%%%%%%%
%%%%%%%%%%%%%%%%%%%%%%%%%%%%%%%%%%%%%
\section{Proof of Theorem \ref{thm:cost}} \label{append:A3}

First, we will consider intra-rack transmissions. In each helper rack (say $m \in  \tau$) involved in the inter-rack repair,  $|\mu_{j}|$ nodes in it will need to transmit its data ($X_{m,s}$ where $s\in\mu_{j}$) to the processing unit in the helper rack.  So, the set of symbols sent from the nodes to the processing unit is
$
\bigcup_{j\in\pattern\setminus \alpha}\mu_{j}
$.
Consequently, the total number of symbols sent to the processing units of  helper rack is equal to 
$
|\tau||\bigcup_{j\in\pattern\setminus \alpha}\mu_{j}|
$ where $|\tau|$ is the number of helper racks. That explains the first term in LHS of \eqref{thm3:eqa}.
Next, at each helper rack $m$, it will need to transmit 
\[
\{ \r_{j} X_{m,*}  :  j \in  \pattern\setminus \alpha\}
\]
to the failing rack. 
As these symbols may be linearly dependent, the actual number of symbols that really needed to be transmitted is only $\dim \langle \r_j , j\in \pattern\setminus\alpha \rangle$. Therefore, the total number of inter-rack transmission is $|\tau|\dim \langle \r_j , j\in \pattern\setminus\alpha \rangle$. This explains the upper bound on \eqref{thm3:eqb}.

After receiving the transmissions from the helper racks, the processing unit of the failed rack can now aim to recover the failed nodes. For each $j\in   \pattern$, it requires transmission from nodes in the set $\beta_{j}$. Therefore, the number of intra-rack transmissions in rack 1 from nodes to the processing unit is 
$
|\cup_{j\in\pattern} \beta_{j}|
$. This corresponds to the second term in RHS of \eqref{thm3:eqa}. 
Finally, receiving all the symbols, the processing nodes can recover the contents of all the failed nodes. It requires $|\pattern|$ intra-rack transmissions from the processing unit of the failed rack to the failed nodes for recovery, explaining the last term of LHS of  \eqref{thm3:eqa}. The theorem thus proved.

\section{Proof of Theorem \ref{thm:CodeRate}} \label{append:A4}

According to \eqref{eq:IRP} and \eqref{eq:ARP}, the total number of parity check equations is at most $MS_{1} + LS_{2}$ while  the number of variables is $MN$.
Therefore, the rate of the code is at least
\[
\frac{MN - MS_{1} - LS_{2}}{MN}.
\]

Now, Consider the following set
$$
\S_m ({\b}_{m})=
\left\{
X_{m,*}^{\top} : \: \Ha X_{m,*}^{\top} = \zero \text{ and } \Hb X_{m,*}^{\top} = {\b}_{m}
\right\}.
$$
In other words, $\S_m({\b}_{m})$ is the set of solutions for the system of linear equations specified by parity check matrices $\Ha$ and $\Hb$ in \eqref{eq:IRP} and \eqref{eq:ARP}, respectively. From linear algebra, the number of these solutions for any $\b_m$ is given by
\[
|\S_m({\b}_{m})| = q^{N - S_{1} - S_{2}},
\]
where $q$ is the field size. Similarly, let  
\begin{align}
S({\bf B}) = \left\{\X :\: \Ha X^{\top} = \zero \text{ and } \Hb X^{\top} = {\bf B}\right\}.
\end{align}
where ${\bf B} =  [{\b}_{1}, \ldots, {\b}_{M}]$.
Then %Since the rows of $\Ha$ and $\Hb$ are linearly independent, 
$
|S({\bf B})| = q^{M(N - S_{1} - S_{2})}
$
for any ${\bf B}$. 

Now, consider the set
\[
{\cal S} \triangleq \{  
\X:\: \X \text{ satisfies \eqref{eq:IRP} and \eqref{eq:ARP}}
\}.
\]
Then it is clear that 
\begin{align*}
|{\cal S}| & = \sum_{{\bf B} : \: {\bf B}\solG^{\top} = \zero  } |S({\bf B})|  \\
& = |\Delta| q^{M(N - S_{1} - S_{2})}
\end{align*}
where
$
\Delta = 
\left\{
\bf B: \: \bf B \solG^{\top} = \zero
\right\}.
$

As the rank of $\solG$ is $L$ and $\bf B$ is a matrix of size $S_{2}\times M$, 
$
|\Delta| = q^{(M-L)S_{2}} 
$. 
Consequently, 
\begin{align}
|{\cal S}|  & = q^{(M-L)S_{2}}q^{M(N - S_{1} - S_{2})} \\
& = q^{MN - MS_{1} - LS_{2}}.
\end{align}
The theorem then follows.

\section{Proof of Proposition \ref{prop2}} \label{append:A5}

%-------------Proof of Prop
By definitions,  
\begin{align}
a^*_{\w,\s} & = \frac{1}{|S_\N|} \sum_{\sigma \in S_\N} a_{\w,\s}^{(\sigma)} \\
a^*_{\w',\s'} & = \frac{1}{|S_\N|} \sum_{\sigma \in S_\N} a_{\w',\s' }^{(\sigma)}.
\end{align}
Now, by   \eqref{eq35}--\eqref{eq37}, there exists a permutation $\mu \in S_{\N}$ such that 
$\w'   =\mu(\w) $ and $\s'  =\mu(\s)$.
%\begin{align}
%\w' & =\mu(\w) \\
%\s' & =\mu(\s).
%\end{align}
Hence, for any permutation $\sigma \in S_{\N}$, we have
\begin{align}
\sigma(\w') & = \sigma(\mu (\w) )  = (\sigma \circ \mu) (\w)  
\end{align}
Similarly, we have 
$
\sigma(\s')  = (\sigma \circ \mu) (\s)  
$.
Consequently, 
\begin{align}
a^*_{\w',\s'} & = \frac{1}{|S_\N|} \sum_{\sigma \in S_\N} a_{\w',\s'}^{(\sigma)} \\
& = \frac{1}{|S_\N|} \sum_{\sigma \in S_\N} a_{\sigma(\w'), \sigma(\s')} \\
& = \frac{1}{|S_\N|} \sum_{\sigma \in S_\N} a_{(\sigma\circ \mu) (\w), (\sigma\circ \mu)(\s)} \\
& = \frac{1}{|S_\N|} \sum_{\sigma \in S_\N} a_{ \sigma (\w),  \sigma (\s)} \\
& = a^*_{\w,\s}
\end{align}
where the second last equality follows from the fact that $S_{\N}$ is a group and hence 
$$
\{ \sigma \circ \mu: \sigma \in S_{\N}\}
=
\{ \sigma : \sigma \in S_{\N}\}.
$$  
Similarly, we can also prove that $c^*_{\w,\s}  = c^*_{\w', \s'} $.

%\section{Proofs}
%\subsection{Proof of Theorem \ref{thm:2}}
%-------------------------------------------------------------------
\section{Proof of Theorem \ref{thm5}} \label{append:A}

%\subsection{Proof of Theorem \ref{thm5}}

%\begin{proof}
Recalling our assumptions in Section \ref{Sec:V}, any codeword from the rack model storage code $\C$ is in the form of $(\bx,\by)\in \C$ 
%with its support $\lambda(\bx,\by)=(\w,\s)$ satisfies 
satisfying \eqref{eq:TwoRackCode}. The support enumerator of the dual code $\Cd$ in \eqref{eqThm:a}  follows from the MacWilliam's identity \cite{macwilliams}. 
Equation \eqref{eqThm:b} follows directly  from definition in \eqref{eq:TwoRackCode}. 
From the intra-rack resilience property in Definition \ref{def2},  any $\delta_1$ simultaneous failures in a rack can be repaired via intra-rack repairs.
%
%
%there exits no codeword $(\bx,\y)$ with the weight of $\bx$ (i.e., the code symbols stored in failed rack) less that or equal to $\delta_1$. This property guarantees that 
Now consider $\w \subseteq \N$ such that $1 \le |\w| \le \delta_1$. Suppose to the contrary that \eqref{eqThm:c} does not hold, i.e., $\Lambda_{\C}(\w,\s) \ge 1$ for some $\s \subseteq \N$. Assume without loss of generality that $\w = \{1, \ldots, |\w|\}$.
Then there exists a codeword $\c = [x_{1}, \ldots, x_{N}, y_1 , \ldots y_N]$ such that 
\begin{enumerate}
\item
$[x_{1}, \ldots, x_{N}] \neq [0, \ldots, 0]$ and hence $[x_{1}, \ldots, x_{N}, y_1 , \ldots y_N] \neq [0, \ldots, 0]$
\item 
$x_{i} =  0 $ for all $N \ge i \ge | \w| +1$
\end{enumerate}
Hence, if the nodes $(x_{1},\ldots,x_{\delta_1})$ fail, then it is impossible to perfectly recover $x_i $ for all $N\ge i \ge |\w| +1$ since the failing rack  cannot distinguish whether the stored symbol was $[x_{1}, \ldots, x_{N}]$ or $[0, \ldots, 0]$. Therefore, we prove \eqref{eqThm:c}.

%---- Inter rack resilience 
Follows from the inter-rack resilience property in Definition \ref{def2}, any $\delta_2$ simultaneous failures in one rack can be recovered by the survived nodes in the same rack and all the nodes in the other rack.
%The property implies that there exits no codeword $(\bx,\by)$ of weight $1 \leq |\w| \leq \delta_2$ and $|\s|=0$.   
Again, suppose to the contrary that $\Lambda_\C(\w,\emptyset) \geq 1$ for some $\w \subseteq \N$ such that $1 \leq |\w| \leq \delta_2$. Assume without loss of generality that $\w= \{1, \ldots, |\w|\}$. Then there exists a codeword $\c = [x_{1}, \ldots, x_{N},0, \ldots 0]$ such that 
\begin{enumerate}
\item
$[x_{1}, \ldots, x_{N}] \neq [0, \ldots, 0]$ and hence $[x_{1}, \ldots, x_{N},0,\ldots,0] \neq [0, \ldots, 0]$
\item 
$x_{i} =  0 $ for all $N \ge i \ge |\w| +1$
\end{enumerate}
Hence, if nodes $x_{1},\ldots,x_{\delta_2}$ fail, then it is impossible to perfectly recover $x_{i} $ for all $N\ge i \ge |\w| +1$. Thus, we prove \eqref{eqThm:e}. 

% since the network cannot distinguish whether the corresponding codeword was $[x_{1}, \ldots, x_{N},0,\ldots,0]$ or $[0, \ldots, 0]$. Therefore, the property follows.

%------- intra rack locality ---------------- 

From the intra-rack locality property,  if the number of failures in a rack is at most  $\Gamma_{1} + 1$, then  each node can be repaired via intra-rack repair, involving at  most $r_{1}$ surviving nodes. 
Now, suppose that the set of failing nodes is $\pattern \cup \{i\}$ where $|\pattern| = \Gamma_{1}$. In order to repair the failed nodes $x_{i}$ via intra-rack repair, there must exists a dual codeword  $(\f' , {\bf 0}) \in \Cd$ whose support is $(\w',\emptyset)$ for some  $\w' \in \Theta_{1}(i,\gamma,r_{1})$. Hence, 
\begin{align} 
\sum_{\w \in \Theta_{1}(i,\gamma,r_1)} 
\Lambda_{\Cd}(\w,\emptyset) \geq 1 
\end{align}
In fact, since supports of $(\f' , {\bf 0})$ and $(c \cdot \f' , {\bf 0})$ are the same for all non-zero $c \in \FF_q$, we have 
\begin{align} 
\sum_{\w \in \Theta_{1}(i,\gamma,r_1)} 
\Lambda_{\Cd}(\w,\emptyset) \geq q-1
\end{align}
and \eqref{eqThm:d} is proved.

%------------------------------
% proof of inter-rack locality 

Finally, from the inter-rack locality property,  if the number of failures in a rack is at most  $\Gamma_{2} + 1$, then  each node can be repaired via inter-rack repair such that involving $i$) at most  $r_{2}$ surviving nodes in the failing rack and  $ii$) at most  $a$ nodes from each   helper rack. Now, suppose that the set of failing nodes is $\pattern \cup \{i\}$ where $|\pattern| = \Gamma_{2}$. In order to repair the failed nodes $x_{i}$ via intra-rack repair, there must exists a dual codeword  $(\f' , \g') \in \Cd$ whose support is $(\w',\s')  \in \Theta_{2}(i,\gamma,r_{2},a)$. Again, as support of  $(\f' , \g')  $ and 
$(c \f' ,  c \g')  $ for all non-zero $c \in \FF_q$, \eqref{eqThm:f} thus follows.
%\end{proof}

%-------------------------------------------------------

%\subsection{Proof of Theorem \ref{thm4}}
\section{Proof of Theorem \ref{thm4}} \label{append:B}
%--------------------------------------------------------------------------

The simplified bound is obtained by respectively replacing $A_{\w,\s}$ and $C_{\w,\s}$
with newly introduced variables $X_{d,e,f}$ and $Y_{d,e,f}$ 
where 
\begin{align}
d  = |\w \setminus \s|,  \quad e = |\w \cap \s|, \text{ and } f  = |\s \setminus \w|.
\label{eq38} 
\end{align}
The replacement is possible, due to Proposition \ref{prop2}. 

With the new notations, some terms will become alike and can be grouped together.
For example, the term $\sum_{\w,\s \subseteq \N} A_{\w,\s}$ can be replaced by $\sum_{d,e,f}  \binom{N}{d,e,f} X_{d,e,f}$. Moreover,  some constraints will become equivalent and hence can be omitted. Most constraints in (LP1) can be rewritten directly. The more complicated ones are (C3), (C8) and (C9). In the following, we will illustrate how to simplify and rewrite them.
 
%%%%%%%%%%%%%%%%%%%%%%%%%%%%%
% --- Reduction of C8
We now first ``reduce'' constraint (C8).  
The key idea of reduction is described as follows: 
First, by Proposition \ref{prop2},
$
C_{\w,\emptyset} = C_{\w',\emptyset}
$
if  $|\w| = |\w'|$.  In fact, by \eqref{eq38}, 
\[
C_{\w,\emptyset} = Y_{|\w|,0,0 }
\]
Consider any given  $(i,\gamma) \in \Phi(\Gamma_{1})$. 
Recall that 
%\begin{multline}
\begin{align}
\Theta_{1}(i,\gamma,r_{1})  %\\
\triangleq 
\{ 
\w  : \: i\in\w, \w \cap  \gamma  = \emptyset  \text{ and } |\w| \le r_{1}+1\}.
\end{align}
%\end{multline}
Let $\Delta_{2}(d) = |\{\w: \: \w \in \Theta_{1}(i,\gamma,r_{1}) \text{ and } |\w| = d\}|$. 
By direct counting, we can prove that 
\begin{align}\label{eqDelta2}
\Delta_{2}(d) =\binom{N-\Gamma_{1}-1}{d-1} 
\end{align}
Note that $\Delta_{2}(d)$ is independent of the choice of $i$ and $\gamma$. 
Then, the collection of constraint (C8) is reduced to one single constraint
\begin{align}
\displaystyle
\sum_{d=2}^{r_1+1} \Delta_{2}(d) Y_{d,0,0} \ge (q-1) \sum_{d,e,f}  \binom{N}{d,e,f} X_{d,e,f}.
\end{align}
%%%%%%%%%%%%%%%%%%%

%%%%%%%%%%%%%%%%%%%
% --- Reduction of C9
The reduction of constraint (C9) is very similar to that of (C8).
Again, consider any given $(i,\gamma) \in \Omega(\Gamma_{2})$. Recall that 
%\begin{multline*}
\[
\Theta_{2}(i,\gamma, r_{2} , a)
 \triangleq 
\{ 
(\w,\s)  : \: i\in \w, \w \cap  \gamma  = \emptyset , 
|\w| \le r_{2}+1 \text{ and } |\s | \le a
\}
\]
%\end{multline*}
Let 
\begin{align*}
\Pi(d,e,f) = 
\left\{
(\w,\s): \: 
\begin{array}{c}
(\w,\s) \in \Theta_{2}(i,\gamma,r_{2},a) \\ \text{ and }
\Upsilon(\w,\s) = (d,e,f) 
\end{array}
\right\},
\end{align*} 
where $\Upsilon(\w,\s)=(|\w \setminus \s|, |\w \cap \s|, |\s \setminus \w| )$. Hence, the constraint (C9) can be rewritten as 
\begin{align}
\displaystyle
\sum_{d+e \le r_{2}+1 , e+f \le a} \Delta_{3}(d,e,f) Y_{d,e,f} \ge (q-1) \sum_{d,e,f}  \binom{N}{d,e,f} X_{d,e,f},
\end{align}
where $\Delta_{3}(d,e,f) = |\Pi(d,e,f)|$.

Finally, it remains to determine $\Delta_{3}(d,e,f)$. 
Partition $\Pi(d,e,f)$ into two sets 
\[
\Pi^{(1)}(d,e,f) \triangleq \{(\w, \s) \in \Pi(d,e,f) \text{ and } i \in \s\}
\]
and 
\[
\Pi^{(2)}(d,e,f) \triangleq \{(\w, \s) \in \Pi(d,e,f) \text{ and } i \not\in \s\}
\]
Hence, 
\begin{align}\label{eqDelta3}
\Delta_{3}(d,e,f)  = |\Pi^{(1)}(d,e,f)| + |\Pi^{(2)}(d,e,f)|
\end{align} 
where, by direct counting,  
\begin{align}
\displaystyle
|\Pi^{(1)}(d,e,f)| =  
\binom{N-\Gamma_{2}-1}{e-1}
  \binom{N-\Gamma_{2}-e-1}{d}
 \binom{N-e-d}{f},  \\
|\Pi^{(2)}(d,e,f)| =  
\binom{N-\Gamma_{2}-1}{e}
 \binom{N-\Gamma_{2}-e-1}{d-1}
 \binom{N-e-d}{f}. 
\end{align}

%%%%%%%%%%%%%%%%%%%%%%%%%%%%%%%%%%%%%%%%%%
Finally, we will consider the reduction of   (C3). The approach is similar,   despite that the derivation is much more tedious. First, we fix a given $(\w,\s)$. Consider the term 
\[ 
A_{\w',\s'}   \prod_{j\in \N}  \kappa_q (w'_j ,w_j)\kappa_q (s'_j,s_j)
\]
in (C3).
%The main technical question here is to determine condition when the term is constant for different $(\w',\s')$.
Let 
\begin{align}
\Xi_{1}(\w,\s) & = \{ i\in\N :\: i\in \w \setminus \s\} \\
\Xi_{2}(\w,\s) & = \{ i\in\N :\: i\in \w \cap \s\} \\
\Xi_{3}(\w,\s) & = \{ i\in\N :\: i\in \s \setminus \w\} \\
\Xi_{4}(\w,\s) & = \{ i\in\N :\: i\in \N \setminus (\w \cup \s)\}.
\end{align}
and
$
\xi_{i,j}(\w,\s,\w',\s') = |\Xi_{i}(\w,\s) \cap \Xi_{j}(\w',\s')|.
$

Suppose 
$
\xi_{i,j}(\w,\s,\w',\s') = \zeta_{i,j}$ for all $ 1\le i,j \le 4.
$ Let $\zeta = (\zeta_{i,j} , \: 1\le i,j \le 4 )$.
Then %we can prove that 
\begin{enumerate}
\item
$
\sigma_{1}(\zeta) = |\w \setminus \w'| + |\s \setminus \s'| 
$
where 
\begin{align} \label{eq:sigma1zeta}
\sigma_{1}(\zeta) %& \triangleq |\w \setminus \w'| + |\s \setminus \s'|  \\
& \triangleq  \zeta_{1,3} + \zeta_{1,4}  + \zeta_{2,3}  + \zeta_{2,4} 
+ 
\zeta_{2,1} + \zeta_{2,4} + \zeta_{3,1} + \zeta_{3,4}.
\end{align}
Similarly, 
$
\sigma_{2}(\zeta) = |\w \cap \w'| + |\s \cap \s'| 
$
where 
\begin{align} \label{eq:sigma2zeta} 
\sigma_{2}(\zeta)  
& \triangleq  \zeta_{1,1} + \zeta_{1,2}  + \zeta_{2,1}  + \zeta_{2,2} 
+ 
\zeta_{2,2} + \zeta_{2,3} + \zeta_{3,2} + \zeta_{3,3}.
\end{align}

\item
$\prod_{j\in \N} \kappa_q (w'_j ,w_j)\kappa_q (s'_j,s_j)  = (q-1)^{\sigma_{1}(\zeta)} (-1)^{\sigma_{2}(\zeta)}$. 

\item 
$
A_{\w',\s'} = X_{d',e',f'}
$
where 
\begin{align}
d' & = \zeta_{1,1} + \zeta_{2,1} +\zeta_{3,1} +\zeta_{4,1} \\
e' & = \zeta_{1,2} + \zeta_{2,2} +\zeta_{3,2} +\zeta_{4,2} \\
f' & = \zeta_{1,3} + \zeta_{2,3} +\zeta_{3,3} +\zeta_{4,3}  
%g' & = \zeta_{1,4} + \zeta_{2,4} +\zeta_{3,4} +\zeta_{4,4} 
\end{align}

%------------------------------------
\item 
Fix $\w,\s$. 
Let  
$
U(\zeta) = 
\left| \{
(\w',\s') :\: \xi_{i,j}(\w,\s,\w',\s') =  \zeta_{i,j},1 \le i,j \le 4
\}\right|.
$
Then 
\begin{align} \label{eq:Uzeta}
U(\zeta ) = 
 \binom{\sum_{j=1}^{4}\zeta_{1j}}{\zeta_{1j}, j =1,2,3,4}
 \binom{\sum_{j=1}^{4}\zeta_{2j}}{\zeta_{2j}, j =1,2,3,4}
 \binom{\sum_{j=1}^{4}\zeta_{3j}}{\zeta_{3j}, j =1,2,3,4}
 \binom{\sum_{j=1}^{4}\zeta_{4j}}{\zeta_{4j}, j =1,2,3,4}.
\end{align}

\item
For any %$\kappa_{1} \triangleq (d,e,f)$ and $\kappa_{2} \triangleq (d',e,f',g')$
$(d,e,f, d',e',f')$, let 
\begin{align}\label{eqDelta1}
\Delta_{1}(d,e,f, d',e',f') = 
\sum_{\zeta: \zeta^{(1)}  = (d,e,f) , \zeta^{(2)}  = (d',e',f') }  U(\zeta)(q-1)^{\sigma_{1}(\zeta)} (-1)^{\sigma_{2}(\zeta)}
\end{align}
where 
\begin{align*}
\zeta^{(1)}  & = \left(\sum_{j} \zeta_{1,j},  \sum_{j} \zeta_{2,j}, \sum_{j} \zeta_{3,j} \right) \\
\zeta^{(2)}  & = \left(\sum_{j} \zeta_{j,1},  \sum_{j} \zeta_{j,2}, \sum_{j} \zeta_{j,3} \right) .
\end{align*}
\end{enumerate}
Grouping all the like terms, the constraint (C3) can be rewritten as in (D3).

% that's all folks
\end{document}